\newtheorem{theorem}{Theorem}[section]
\newtheorem{corollary}[theorem]{Corollary}
\newtheorem{proposition}[theorem]{Proposition}
\DeclareMathOperator{\Tr}{Tr}
\DeclareMathOperator{\Det}{Det}
\theoremstyle{remark}
\numberwithin{equation}{section}
\begin{document}
\date{}
\title{\large \textbf{{Model of pattern formation in marsh ecosystems with nonlocal interactions}\footnote{Partially supported by NSF grant DMS-1716445 and DMS-1313093.} }}
\author
{ Sofya Zaytseva\textsuperscript{a}\footnote{Corresponding author. Email: \texttt{szaytseva@email.wm.edu}} \ \ Junping Shi\textsuperscript{b} \ \  Leah B Shaw\textsuperscript{b} \\
\\
{\small \textsuperscript{a} Department of Applied Science, College of William and Mary, \hfill{\ }}\\
\ \ {\small Williamsburg, Virginia 23187-8795, USA \hfill {\ }}\\
{\small \textsuperscript{b} Department of Mathematics, College of William and Mary, \hfill{\ }}\\
\ \ {\small Williamsburg, Virginia 23187-8795, USA \hfill{\ }}
}
\maketitle


\newcommand{\Om}{\Omega}
\newcommand{\ep}{\varepsilon}
\newcommand{\la}{\lambda}
\newcommand{\R}{{\mathbb R}}
\newcommand{\al}{\alpha}
\newcommand{\io}{\int_{\Om}}
\newcommand{\ds}{\displaystyle}
\newcommand{\noi}{\noindent}
\newcommand{\N}{\mathbb{N}}

\begin{abstract}
{Smooth cordgrass \textit{Spartina alterniflora} is a grass species commonly found in tidal marshes. It is an ecosystem engineer, capable of modifying the structure of its surrounding environment through various feedbacks. The scale-dependent feedback between marsh grass and sediment volume is particularly of interest. Locally, the marsh vegetation attenuates hydrodynamic energy, enhancing sediment accretion and promoting further vegetation growth. In turn, the diverted water flow promotes the formation of erosion troughs over longer distances. This scale-dependent feedback may explain the characteristic spatially varying marsh shoreline, commonly observed in nature. We propose a mathematical framework to model grass-sediment dynamics as a system of reaction-diffusion equations with an additional nonlocal term quantifying the short-range positive and long-range negative grass-sediment interactions. We use a Mexican-hat kernel function to model this scale-dependent feedback. We perform a steady state biharmonic approximation of our system and derive conditions for the emergence of spatial patterns, corresponding to a spatially varying marsh shoreline. We find that the emergence of such patterns depends on the spatial scale and strength of the scale-dependent feedback, specified by the width and amplitude of the Mexican-hat kernel function.}
\end{abstract}
\vspace{0.1in}

\noindent{\textbf{Keywords}: Pattern formation; nonlocal interactions; marsh ecosystem; reaction diffusion; steady state; cooperation.}

\vspace{0.1in}

\noindent{\textbf{MSC (2010)}: 92D40, 92D25, 35K57; 35B36}

\section{Introduction}

\label{intro}
Tidal marshes are among the richest and most productive ecosystems, supporting a variety of wildlife, serving as storm and erosion buffers, and playing an important role in improving water quality \citep{perry2009york,fagherazzi2013marsh,fagherazzi2014coastal}. The global loss of these ecosystems in the recent decades has motivated much research to understand their dynamics and aid in their restoration and management \citep{Deegan12,priestas2015coupled}. Marsh evolution is dynamic and complex, combining various biological and morphological processes happening not only in the marsh itself, but also in the tidal flat that borders it. As a result of these forces and interactions, a sharp scarp separating the marsh and tidal flat becomes a characteristic feature. The processes that take place on this scarp (i.e., marsh edge) influence whether the marsh recedes or expands \citep{tonelli2010modeling}. Various configurations of the marsh edge can be observed in nature, ranging from a uniform to a more jaggedy, sinusoidal shoreline (Figure \ref{fig:1}). While previous ecogeomorphic models have carefully considered the effects of sea-level rise, marsh vegetation colonization, wave activity, sediment fluxes, and underlying hydrodynamics \citep{mariotti2010numerical,tonelli2010modeling,fagherazzi2012numerical,schile2014modeling}, most of these have been numerical, computationally intensive models. We propose a simpler, phenomenological model to describe the large-scale evolution of the marsh in the horizontal direction in terms of two-way interactions between marsh vegetation and sedimentation. In particular, we are interested in the scale-dependent feedback present between marsh vegetation and sedimentation and whether this scale-dependent feedback can explain the spatially varying marsh shoreline, observed in nature.

 \begin{figure}[t]
	\centering
	\includegraphics[width=0.5\textwidth]{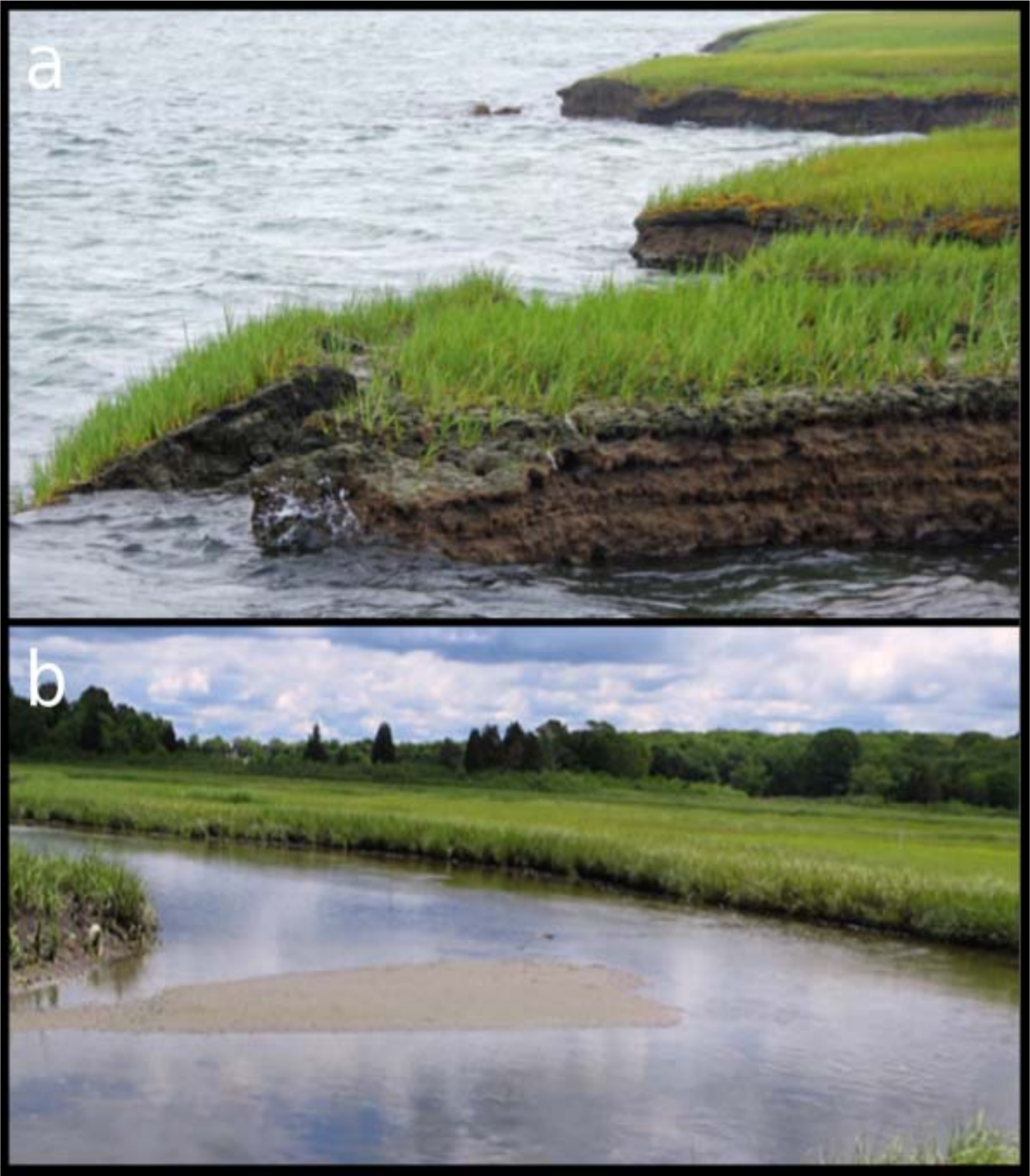}
	\caption{a) Self-organization on the marsh edge in the York River, a tributary of the Chesapeake Bay. Credit: Rom Lipcius. b) Uniform marsh edge. Credit: NOAA National Ocean Service }  \label{fig:1}
\end{figure}

Scale-dependent feedbacks are characterized by the presence of positive and negative interactions that happen at different spatial scales. In particular, scale-dependent feedbacks involving long-range negative interactions and short-range positive interactions are thought to be crucial for pattern development \citep{gierer1972theory, green2015positional, hiscock2015mathematically}, explaining spatially varying patterns in chemical \citep{castets1990experimental,rovinsky1993self}, biological \citep{nakamasu2009interactions,raspopovic2014digit} and ecological systems \citep{Rietkerk08}. In ecological systems, such scale-dependent feedbacks are thought to explain patterns in arid savannas, mussel beds, coral and oyster reefs, mudflats and other ecosystems \citep{Rietkerk08,van2012ecosystem,dibner2015ecological, demussels,pringle2017spatial,Barbier08}. Previously, we proposed a mathematical framework to investigate the evolution of the marsh edge as a result of scale-dependent interactions between sedimentation dynamics and two common marsh species, ribbed mussels (\textit{Geukensia demissa}) and smooth cordgrass (\textit{Spartina alterniflora}) \citep{zaytseva2018}, whose facilitatory nature and positive feedbacks have a significant effect on marsh development and proliferation \citep{Bertness84,Bertness85,watt2010population, altieri2007hierarchical}. While mussels are commonly found in tidal marshes, that is not always the case. Since we are interested in the marsh edge dynamics in the absence of mussels, in this paper we focus on the related model without the mussel population. The goal is to understand which conditions lead to a spatially varying marsh shoreline (Figure \ref{fig:1}a) versus a spatially uniform marsh shoreline (Figure \ref{fig:1}b) and what may be the implications of this spatial heterogeneity.

In general, there are three classes of deterministic models that explain pattern formation as a result of scale-dependent feedbacks: Turing-style activator inhibitor models, kernel-based models and differential flow models \citep{borgogno2009mathematical}. In Turing-style activator inhibitor models, the pattern formation arises as a result of differences in the diffusion coefficients of the activator and inhibitor species \citep{turing1952chemical,white1998spatial,parshad2014turing}. Kernel-based models are typically integro-differential equations where the pattern formation arises from the spatial interactions modeled using a kernel function, describing the nature of the short-range and long-range interactions \citep{ britton1990spatial,gourley2001spatio,murray2001mathematical, billingham2003dynamics,ninomiya2017reaction}. This kernel-based approach is a common feature in neural models \citep{amari1977dynamics}, and has also been used in models of vegetation patterns in arid and semi-arid climates \citep{d2006patterns,borgogno2009mathematical,merchant2011instabilities,martinez2013vegetation,martinez2014minimal,martinez2018scale}. Finally, differential flow models are similar to Turing models but the pattern formation now arises not just from the differences in diffusion coefficients, but also from the differences in the flow rates of the species, reflected in the additional advection terms \citep{rovinsky1993self,siero2015striped,klausmeier1999regular}. 

The model we propose here combines elements of both the Turing-model and the kernel-based model and includes both diffusion terms and a kernel function that describes the short-range and long-range interactions between marsh grass and sediment volume. On a local scale, marsh grass attenuates hydrodynamic energy, enhancing sediment accretion and promoting further vegetation growth while the diverted water flow promotes formation of erosion troughs over longer distances \citep{bouma2007spatial,balke2012conditional, schwarz2015interactions}. We model this scale-dependent feedback using a Mexican-hat kernel function that quantifies the strength of positive and negative feedbacks neighboring individuals exert on each other \citep{fuentes2003nonlocal,d2006patterns,borgogno2009mathematical,Siebert15}.
Similar kernel-based approaches have been used to model nonlocal interactions in the context of predator-prey and competition dynamics \citep{merchant2011instabilities,bayliss2015patterns, banerjee2016prey}. The interactions in our system are mostly cooperative and the impact of nonlocal interactions in such systems have not been studied in depth. Given the importance of facilitation in ecosystem dynamics \citep{bertness1994positive, halpern2007incorporating,silliman2015facilitation,he2013global}, it therefore becomes imperative to study nonlocal interactions in cooperative systems. In addition, cooperative systems are likely to display bistable dynamics and the phenomenon of hysteresis \citep{kefi2016can,van2001alternate}. This makes such systems especially prone to collapsing to an irreversible state as environmental conditions gradually worsen and a tipping point is reached \citep{dakos2011slowing, kefi2014early, kefi2016can}. Pattern formation has previously been suggested as a possible coping mechanism, allowing such systems to escape degradation past their tipping point \citep{chen2015patterned}. Due to the reported degradation of tidal marsh habitats around the world, the study of pattern formation in these systems becomes particularly important and can provide more insight into the possible pattern forming mechanism and its implication for the system's resilience and adaptation to environmental changes.

Our paper is organized as follows: In Section 2, we introduce the nonlocal reaction-diffusion model and the background from ecological literature. Section 3 includes analysis and simulation results. By approximating our model using a steady state biharmonic approximation, we are able to derive conditions for the emergence of spatial patterns in our system. We then use numerical simulations to confirm our theoretical findings. Some concluding remarks are made in Section 4.

 \begin{figure}
 	\centering
 	\includegraphics[trim=4.8cm 9cm 5cm 9.6cm,width=0.55\textwidth,clip]{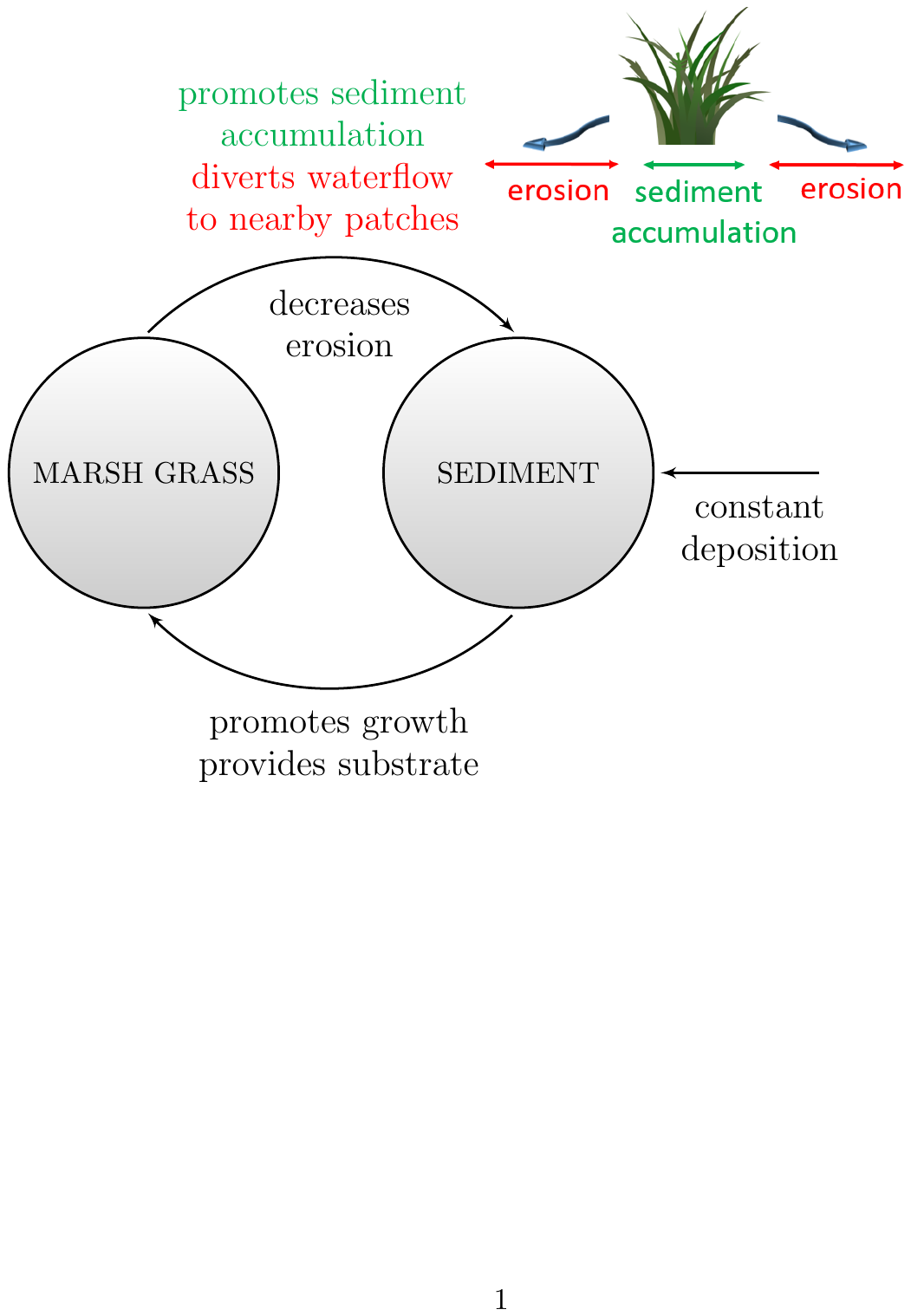}
 	\caption{Diagram of grass-sediment interactions adapted from \citep{Bertness84}.}  \label{fig:2}
 \end{figure}

\section{Model}
 \label{sec:1}
 We consider the two-way interactions between marsh grass and sediment (Figure \ref{fig:2}). Marsh grass binds sediment, stabilizes the marsh edge and attenuates wave energy, helping to mitigate effects of erosion \citep{Gleason79,gedan2011present,ysebaert2011wave,moller2014wave}. As a consequence of reduced erosion, the increased sediment levels promote vegetation growth by decreasing tidal currents \citep{nyman1993relationship, van2005self}. Along with these local interactions, there is a nonlocal interaction that occurs between marsh vegetation and sediment \citep{k2008does, schwarz2015interactions,bouma2009density,VanHulzen07}. Over short distances, marsh vegetation enhances sediment accretion through the attenuation of hydrodynamic energy, contributing to short-range activation. However, as the water gets diverted to the surrounding areas, those areas erode more quickly, contributing to long-range inhibition \citep{ bouma2007spatial,balke2012conditional, bouma2013organism,  fagherazzi2013marsh, fagherazzi2014coastal}. 
 
 Incorporating all the above mentioned interactions, we obtain the following system:
\begin{equation}
\left \{ \begin{aligned}
\partial_{\tau} \hat{G}&=\hat{D}_{\hat{G}}\partial^2_{x}\hat{G}+ \underbrace{\hat{G}\Big(\hat{F}(\hat{S})-c\hat{G}\Big)}_\text{Logistic growth},&x\in \mathbb{R}, \tau>0, \\ 
\partial_{\tau}\hat{S}&=\hat{D}_{\hat{S}}\partial^2_{x}\hat{S}+\underbrace{\eta}_\text{Deposition}-\underbrace{\hat{S}\hat{L}(\hat{G})}_\text{Erosion}+\underbrace{\hat{\lambda} \hat{S}\int_{-\infty}^{\infty} P(x') \hat{G}(x-x') dx'}_\text{Nonlocal deposition/erosion}, &x\in \mathbb{R}, \tau>0,\\
\hat{G}(x,0)&=\hat{G_{0}}(x,0)\ge 0,\,\,\, \hat{S}(x,0)=\hat{S_{0}}(x,0)\ge 0, &x\in \mathbb{R}.
\end{aligned}\right.
\label{eq:1}
\end{equation}
where
\begin{equation*}
\begin{aligned}
\hat{F}(\hat{S})=\frac{p^*(\hat{S}-l_{1})}{\hat{S}+l_{1}^*},\;\;\hat{L}(\hat{G})=\frac{\psi(\hat{G}+k_{s}g)}{\hat{G}+k_{s}},
\end{aligned}
\end{equation*} 
with
\begin{equation*}
\begin{aligned}
p^*,c,l_{1},l_{1}^*, \psi,k_{s},g,\eta, \hat{\lambda} \ge 0.
\end{aligned}
\end{equation*} 
We consider the change in grass shoot density $\hat{G}(x,t)$ ($shoots/m^2$) and sediment height $\hat{S}(x,t)$ ($meters$) on an infinite domain with $x \in \mathbb{R}$, which represents the one-dimensional horizontal cross-section of the marsh edge (see Figure \ref{fig:3}a). We assume logistic growth for the grass density and make an adjustment for the obligatory nature of grass-sediment interactions where below some minimum sediment height $l_{1}$, grass cannot persist. For the sediment equation, we include the baseline sediment deposition $\eta$ \citep{van2005scale,liu2012alternative,Liu14a}. The erosion term is a decreasing function of grass density with $g > 1$ where $\psi g$ corresponds to the minimum erosion rate in the total absence of grass \citep{mariotti2010numerical,silliman2012degradation}. In addition, each equation also includes a diffusion term to quantify spread along the shoreline with diffusion coefficients $\hat{D}_{\hat{G}}$ and $\hat{D}_{\hat{S}}$. To model the scale-dependent interactions, we use a convolution term with a Mexican-hat kernel function $P(x)$:
\begin{equation}
\begin{aligned}
P(x)=\frac{1}{\sqrt{2 \pi}}\Big[\frac{1}{\sigma_{1}}\exp\Big(-\frac{x^2}{2\sigma_{1}^2}\Big)-\frac{1}{\sigma_{2}}\exp\Big(-\frac{x^2}{2\sigma_{2}^2}\Big)\Big],\,\,
\sigma_{1}<\sigma_{2}.
\label{kernel}
\end{aligned}
\end{equation}
The choice of the kernel function is appropriate given the nature of the scale-dependent feedback with short-range positive interactions and long-range negative interactions  (Figure \ref{fig:3}b). There are three main parameters that control the shape of the kernel: $\hat{\lambda}$, which modulates the amplitude and variances $\sigma_{1}^2$ and $\sigma_{2}^2$, which specify the scale of the excitatory and inhibitory interactions, respectively. Further, the kernel function $P(x)$ is symmetric and satisfies the following property: 
\begin{equation}
\begin{aligned}
\int_{-\infty}^{\infty} P(x)dx=0.
\end{aligned}
\label{eq:symmetry}
\end{equation} 

\begin{figure}
\centering
\includegraphics[width=0.75\textwidth]{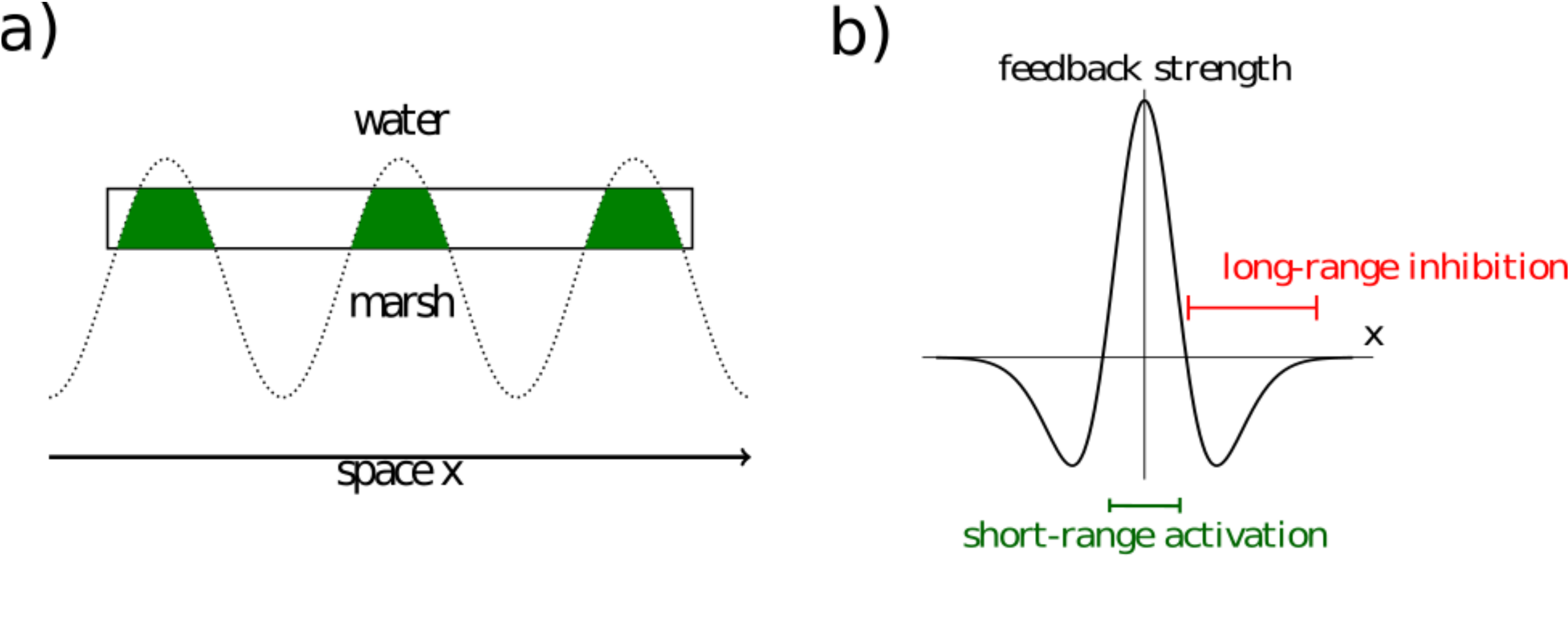}
\caption{Illustration of a) the cross-section of marsh edge used to model the marsh dynamics and b) Mexican-hat kernel and scale-dependent feedback adapted from \citep{Rietkerk08}.}
\label{fig:3}
 \end{figure}
For mathematical simplification, we non-dimensionalize system (\ref{eq:1}) by using the following rescaling:
\begin{equation*}
\begin{aligned}
t&=p^{*}\tau,\ G=\frac{c}{p^{*}}\hat{G},\, 
 \, S=\frac{\psi g}{\eta}\hat{S}.  
\end{aligned}
\end{equation*} 
Then the original system (\ref{eq:1}) becomes:
\begin{equation}
\left \{ \begin{aligned}
\partial_{t}G&=D_{G}\partial^2_{x}G+G\Big(F(S)-G\Big),&x\in  \mathbb{R}, t>0,  \\
\partial_{t}S&=D_{S}\partial^2_{x}S+\phi\Big(-L(G) S+1\Big)+\lambda S\int_{-\infty}^{\infty} P(x') G(x-x') dx',&x\in  \mathbb{R}, t>0,\\
G(x,0)&=G_{0}(x,0)\ge 0,\,\,\, S(x,0)=S_{0}(x,0)\ge 0, &x\in  \mathbb{R}.
\end{aligned}\right.
\label{eq:2}
\end{equation}
with
\begin{equation}
\begin{aligned}
F(S)&=\frac{S-e_{1}}{S+p_{1}},\,\, L(G)=\frac{\delta G+e_{3}}{G+e_{3}},
\end{aligned}
\label{eq:scaled_functions}
\end{equation}
and $P(x)$ still defined as before. The new parameters are all positive with the following rescaling:
\[\begin{matrix}
\ds e_{1}=\frac{\psi g l_{1}}{\eta},& \ds p_{1}=\frac{\psi g l_{1}^*}{\eta}, &  \ds e_{3}=\frac{k_{s}c}{p^{*}}, & \ds \delta=\frac{1}{g} \\
\ds \phi=\frac{\psi g}{p^{*}}, & \ds D_{G}=\frac{\hat{D}_{G}}{p^{*}}, &  \ds D_{S}=\frac{\hat{D}_{S}}{p^{*}}, 
& \ds \lambda=\frac{\hat{\lambda}}{c}. &
\end{matrix}\]
Not only does this rescaling simplify the notation, but it also allows for an easier interpretation of the functional forms of $F(S)$ and $L(G)$ (See Figure \ref{fig:10} in the Appendix). The scaled intrinsic growth rate of grass is now between $0$ and $1$, and we can think of the threshold $e_{1}$ as the minimum amount of sediment necessary for the persistence of grass. Similarly, the erosion term given by $L(G)$ is scaled to be between $\delta$ and $1$ for ease of interpretation. 

\section{Results} \label{sec:3}

In classic Turing models, spatially patterned solutions result from symmetry-breaking instability in which an otherwise stable spatially uniform steady state can become destabilized by the addition of diffusion and lead to the emergence of spatial patterns. The condition for the emergence of spatial patterns is contingent on the idea that the species in the model diffuse at significantly different rates, with the activator species diffusing much more slowly than the inhibitor species. The conditions for such a Turing-instability can be derived by performing a linear stability analysis around the positive steady state to obtain conditions under which the addition of diffusion acts to destabilize the system. Our model differs from the classic Turing model in that it lacks the classic activator-inhibitor dynamics and includes an additional kernel function term that models the scale-dependent feedback between grass and sediment volume. Assuming that the kernel function has a limited effect at relatively large distances, we can perform a biharmonic approximation of our system and decompose the integral term into two terms involving just partial derivatives, corresponding to short-range positive interactions and long-range negative interactions. We can then perform a linear stability analysis around the positive steady state and derive conditions under which this state is destabilized and leads to the emergence of a spatially periodic solution. Therefore, we first consider the spatially independent dynamics of our model and derive conditions under which the positive steady state is stable in the corresponding system of ODEs and then use these results to understand the spatial dynamics of the full model.

\subsection{Spatially homogeneous model}\label{sec:3.1}

Let's assume that $G$ and $S$ do not vary and are spatially constant. Then, we can use the property in (\ref{eq:symmetry}) and drop both the diffusion and integral terms. In this way, we are left with the following spatially independent system:
\begin{equation}
\left \{ \begin{aligned}
\frac{dG}{dt}&=G\Big(F(S)-G\Big), & t>0, \\
\frac{dS}{dt}&=\phi(-L(G) S+1), &t>0.
\end{aligned} \right.
\label{eq:3}
\end{equation}
We look for spatially uniform steady states $(G^{*}, S^{*})$ of (\ref{eq:3}) which satisfy $\frac{dG}{dt}=0$ and $\frac{dS}{dt}=0$. There are two such types of steady states: the degraded (grass-free) state $E_{S}=(0,1)$ and the coexistence state $E_{GS}=(G^*,\frac{1}{L(G^*)})$ with both grass and sediment present, where $G^*$ satisfies $G=F(\frac{1}{L(G)})$. Since we are interested in physically realistic positive steady states, the coexistence state $E_{GS}$ exists if and only if  $\frac{1}{L(G^*)}>e_{1}$.

We first consider the degraded state $E_{S}=(0,1)$ and its stability. This result is summarized below.
\begin{proposition}\label{proposition}
	The degraded steady state $E_{S}=(0,1)$ is locally asymptotically stable with respect to (\ref{eq:3}) if $ e_{1}>1$ and is unstable with respect to (\ref{eq:3}) if $ e_{1}<1$.
\end{proposition}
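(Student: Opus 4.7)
The plan is to do a standard linear stability analysis at the equilibrium $E_S=(0,1)$ by computing the Jacobian of the ODE system (\ref{eq:3}) and reading off the signs of its eigenvalues.

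First I would write the Jacobian of the right-hand side of (\ref{eq:3}) at a general point $(G,S)$:
\begin{equation*}
J(G,S)=\begin{pmatrix} F(S)-2G & G\,F'(S) \\ -\phi S\,L'(G) & -\phi L(G) \end{pmatrix}.
\end{equation*}
Then I would evaluate at $E_S=(0,1)$. Because the $(1,2)$-entry contains the factor $G$, it vanishes at $G=0$, so the matrix becomes lower triangular and its eigenvalues are simply the diagonal entries. A short computation using (\ref{eq:scaled_functions}) gives $L(0)=e_3/e_3=1$ and $F(1)=(1-e_1)/(1+p_1)$, so the eigenvalues are
\begin{equation*}
\mu_1=\frac{1-e_1}{1+p_1},\qquad \mu_2=-\phi.
\end{equation*}

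Since $\phi>0$, the eigenvalue $\mu_2$ is always negative, so stability is determined entirely by the sign of $\mu_1$. Because $1+p_1>0$, we have $\mu_1<0$ precisely when $e_1>1$ and $\mu_1>0$ precisely when $e_1<1$. The conclusion of the proposition then follows from the Hartman--Grobman theorem (or the standard linearization principle for planar ODEs): $E_S$ is locally asymptotically stable when $e_1>1$ and unstable (in fact a saddle, given the signs of $\mu_1$ and $\mu_2$) when $e_1<1$.

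There is essentially no obstacle here; the only mild subtlety is verifying that $L$ is smooth at $G=0$ so that linearization is justified, which is immediate from the formula $L(G)=(\delta G+e_3)/(G+e_3)$ together with $e_3>0$. The borderline case $e_1=1$ is not claimed by the proposition and would require a center-manifold or higher-order argument, so I would simply note that it lies outside the statement.
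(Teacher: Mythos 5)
Your proof is correct and follows essentially the same route as the paper: both evaluate the Jacobian of (\ref{eq:3}) at $E_S=(0,1)$, observe that the $(1,2)$-entry vanishes so the matrix is triangular, and read off the eigenvalues $F(1)=(1-e_1)/(1+p_1)$ and $-\phi L(0)=-\phi$. Your additional remarks on the saddle structure for $e_1<1$, the smoothness of $L$ at $G=0$, and the excluded borderline case $e_1=1$ are accurate but not needed beyond what the paper records.
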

\begin{proof}
	 The Jacobian matrix $\mathbf{J}$ of (\ref{eq:3}) evaluated at $E_{S}=(0,1)$ is given by:
	\[
	\mathbf{J}_{E_{S}}=
	\begin{bmatrix}
	F(1)  & 0   \\
	-\phi \frac{dL}{dG} & -\phi L(0)    
	\end{bmatrix}.
	\]
	The two corresponding eigenvalues are $\lambda_{1}=F(1)=\frac{1-e_{1}}{1+p_{1}}$ and $\lambda_{2}=-\phi L(0)=-\phi$. It is clear that $\lambda_{2}$ is always negative. Further, $\lambda_{1}=\frac{1-e_{1}}{p_{1}+1}$ is negative for $e_{1}>1$. Therefore, the steady state $E_{S}$ is locally asymptotically stable for $e_{1}>1$ and unstable for $e_{1}<1$.  
\end{proof}
The parameter $e_{1}$ is the minimal steady state sediment elevation needed for the persistence of grass. For the trivial steady state $E_{S}=(0,1)$, as long as $e_{1}>1$, its value will exceed the steady state value of sediment, leading to a negative growth rate for grass and a stable trivial state. 

We now consider the positive coexistence steady state $E_{GS}=(G^*,\frac{1}{L(G^*)})$ where $G^*$ satisfies $G=F(\frac{1}{L(G)})$ and obtain the following theorem:
\begin{theorem}\label{theorem1}
	Suppose that $p_{1},e_{1}, e_{3}, \phi>0$ and $0<\delta<1$. Let
	\begin{equation}
	\begin{aligned}
	A=1+p_{1}\delta,\,	B=-1+e_{3}+p_{1}e_{3}.
	\end{aligned}
	\label{eq:AandB}
	\end{equation}
	\begin{enumerate}
		\item \textbf{(Case I)}\, If $B+\delta<0$, then there exists a saddle-node bifurcation point $e_{1}=e_{1}^*>1$ such that (\ref{eq:3}) has one positive steady state $(G^*_{+},\frac{1}{L(G^*_{+})})$ for $0<e_{1}\le 1$ and $e_{1}=e_{1}^*$, two positive steady states $(G^*_{\pm},\frac{1}{L(G_{\pm})})$ for $1<e_{1}<e_{1}^*$,   and no positive steady state for $e_{1}>e_{1}^*$. The bifurcation point $e_{1}^*$ is defined as follows:
		\begin{equation}
			\begin{aligned}
		e_{1}^*=\frac{2e_{3}A(\delta^2+B\delta-Ae_{3})+(2Ae_{3}-B\delta)\sqrt{A^2e_{3}^2-Ae_{3}\delta(B+\delta)}}{\delta^2\sqrt{A^2e_{3}^2-Ae_{3}\delta(B+\delta)}}.
		\end{aligned}
		\label{eq:bifurcation_point}
		\end{equation}
		\item \textbf{(Case II)}\,
		If $B+\delta \ge 0$, then there exists a unique positive steady state $(G^*_{+},\frac{1}{L(G^*_{+})})$ for all $0<e_{1}<1$, and no positive steady state for $e_{1}\ge 1$.
	\end{enumerate}
\end{theorem}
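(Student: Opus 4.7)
The plan is to reduce the search for coexistence equilibria to counting positive roots of a single quadratic in $G$, and then to perform a Vieta/discriminant analysis, case-split on the sign of $B+\delta$.

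First I would substitute $S^{*}=1/L(G^{*})=(G^{*}+e_{3})/(\delta G^{*}+e_{3})$ into the grass nullcline $G^{*}=F(S^{*})=(S^{*}-e_{1})/(S^{*}+p_{1})$ and clear denominators. Using $A=1+p_{1}\delta$ and $B=-1+e_{3}+p_{1}e_{3}$, a short rational computation collapses the coexistence condition to
\begin{equation*}
q(G;e_{1}) := AG^{2}+(B+e_{1}\delta)G+e_{3}(e_{1}-1)=0,
\end{equation*}
so positive equilibria of (3.3) correspond bijectively to positive roots of $q(\cdot;e_{1})$. Write $D(e_{1})=(B+e_{1}\delta)^{2}-4Ae_{3}(e_{1}-1)$ for the discriminant; since $A>0$, Vieta gives root product $e_{3}(e_{1}-1)/A$ and root sum $-(B+e_{1}\delta)/A$. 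For $0<e_{1}<1$ the product is negative, so $q$ has exactly one positive root $G^{*}_{+}$ and one negative root in either case. At $e_{1}=1$ the roots are $0$ and $-(B+\delta)/A$: the second is strictly positive in Case I and nonpositive in Case II. For $e_{1}>1$ the product is positive, so real roots share a sign; in Case II ($B+\delta\ge 0$) the identity $B+e_{1}\delta=(B+\delta)+(e_{1}-1)\delta>0$ forces a negative sum, so any real roots are nonpositive, which disposes of Case II.

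Case I is the substantive part. For $e_{1}$ slightly above $1$, $-(B+\delta)/A>0$ and $D(1)=(B+\delta)^{2}>0$, so both roots are real and positive, giving $G^{*}_{\pm}$. To locate the saddle-node I would find the smallest $e_{1}>1$ at which $D$ vanishes. Since $D(e_{1})$ is an upward parabola in $e_{1}$ with $D(1)>0$, and a direct evaluation yields $D(-B/\delta)=4Ae_{3}(B+\delta)/\delta<0$, there is a unique such root $e_{1}^{*}\in(1,-B/\delta)$, namely the smaller quadratic-formula root of $\delta^{2}e_{1}^{2}+(2B\delta-4Ae_{3})e_{1}+B^{2}+4Ae_{3}=0$, i.e.
\begin{equation*}
e_{1}^{*}=\frac{2Ae_{3}-B\delta-2\sqrt{A^{2}e_{3}^{2}-Ae_{3}\delta(B+\delta)}}{\delta^{2}}.
\end{equation*}
Multiplying numerator and denominator by $\sqrt{A^{2}e_{3}^{2}-Ae_{3}\delta(B+\delta)}$ and using $Ae_{3}-\delta(B+\delta)=-(\delta^{2}+B\delta-Ae_{3})$ recasts this precisely as the formula (3.3) in the statement. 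Because $e_{1}^{*}<-B/\delta$, the middle coefficient at $e_{1}=e_{1}^{*}$ is still negative, so the merged double root $G=-(B+e_{1}^{*}\delta)/(2A)$ is positive; a direct Jacobian computation at this point exhibits a simple zero eigenvalue, justifying the saddle-node label. The tally $1<e_{1}<e_{1}^{*}\Rightarrow D>0$ (two positive roots), $e_{1}=e_{1}^{*}\Rightarrow D=0$ (one merged root), $e_{1}>e_{1}^{*}\Rightarrow D<0$ (none) then completes Case I. The only real obstacle I anticipate is this final algebraic rearrangement matching the asymmetric form (3.3); all the remaining sign bookkeeping via Vieta and the discriminant is routine.
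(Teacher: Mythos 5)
Your proposal is correct and rests on the same algebraic reduction as the paper --- the quadratic $AG^{2}+(B+e_{1}\delta)G+e_{3}(e_{1}-1)=0$ --- but you analyze it by a genuinely different mechanism. The paper solves the steady-state relation for $e_{1}$ as a function $e_{1}=K(G)=\frac{e_{3}-BG-AG^{2}}{\delta G+e_{3}}$ and counts intersections of horizontal lines with the graph of $K$ (using its zeros, the sign of $K'(0)$, and the critical points $\tilde{G}_{\pm}$), locating the fold at $e_{1}^{*}=K(\tilde{G}_{+})$. You instead freeze $e_{1}$ as a parameter, count positive roots by Vieta's sign rules, and locate the fold as the smaller zero of the discriminant $D(e_{1})$ viewed as an upward parabola in $e_{1}$. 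The two computations of $e_{1}^{*}$ agree: your rationalization using $R^{2}=-Ae_{3}(\delta^{2}+B\delta-Ae_{3})$ does recover \eqref{eq:bifurcation_point} exactly. Your route is purely algebraic and makes the chain $1<e_{1}^{*}<-B/\delta$ (hence positivity of the merged double root) explicit; the paper's route delivers the geometric picture of Figure \ref{fig:4} essentially for free and, as noted below, sidesteps the one place where your final tally slips.

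The slip: for $e_{1}>e_{1}^{*}$ you assert $D<0$, but $D(e_{1})$ is an upward parabola in $e_{1}$, so $D>0$ again once $e_{1}$ exceeds the larger root $e_{1}^{**}$ of $D$, and the quadratic then has two real roots. The conclusion ``no positive steady state'' still holds, but it must be finished with the Vieta signs you already set up: since $D(-B/\delta)<0$ places $-B/\delta$ strictly between the two roots of $D$, any $e_{1}>e_{1}^{**}$ satisfies $e_{1}>-B/\delta$, so the root sum $-(B+e_{1}\delta)/A$ is negative while the product $e_{3}(e_{1}-1)/A$ is positive, forcing both real roots to be negative. With that one line added, Case I is complete. (In the paper's framing the issue never arises, because $K(G)\to-\infty$ as $G\to\infty$, so a horizontal line at height $e_{1}>e_{1}^{*}=\max_{G>0}K(G)$ visibly misses the positive branch.)
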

\begin{proof}
We can rewrite $G=F(\frac{1}{L(G)})$ as
\begin{equation}
\begin{aligned}
G&=\frac{\frac{1}{L(G)}-e_{1}}{\frac{1}{L(G)}+p_{1}}\\
&\implies e_{1}=\frac{1-G}{L(G)}-p_{1}G\\
&\implies e_{1}=\frac{e_{3}-BG-AG^2}{G\delta+e_{3}}:=K(G),
\end{aligned}
\label{eq:parabola2}
\end{equation} 
where $A$ and $B$ are defined as in (\ref{eq:AandB}).
The function $K(G)$ from (\ref{eq:parabola2}) crosses the horizontal axis at
\begin{equation}
\begin{aligned}
G^{\pm}_{K}=\frac{-B\pm \sqrt{B^2+4e_{3}A}}{2A}.
\end{aligned}
\label{eq:roots}
\end{equation}
Since $A\ge 0$, the roots in (\ref{eq:roots}) have to be of opposite sign. Therefore, the graph of $K(G)$ has one positive and one negative root. Note that the vertical asymptote of $K(G)$ is irrelevant as it is located where $G$ is negative and outside of the physically realistic range. 

Differentiating $K(G)$ in (\ref{eq:parabola2}) with respect to $G$ yields:
\begin{equation}
\begin{aligned}
K'(G)&=\frac{-B-2AG-\delta K(G)}{\delta G+e_{3}}=\frac{-AG(2e_{3}+\delta G)-e_{3}(B+\delta)}{(\delta G+e_{3})^2}\\
&=\frac{-L(G)-L'(G)(1-G)}{L(G)^2}-p_{1},\\
\label{eq:slope}
\end{aligned}
\end{equation}
and 
\begin{equation}
\begin{aligned}
K(0)=1,\, K'(0)=\frac{-(B+\delta)}{e_{3}},
\end{aligned}
\label{eq:slope2}
\end{equation}
Further, we can set $K'(G)=0$ to obtain the maximum and minimum points of the function:
\begin{equation}
\begin{aligned}
\tilde{G}_{\pm}=\frac{-e_{3} \pm \sqrt{e_{3}^2-\frac{e_{3}\delta(B+\delta)}{A}}}{\delta}.
\end{aligned}
\label{eq:turning}
\end{equation}
We then have two cases arising depending on the sign of $K'(0)$ in (\ref{eq:slope2}) (Figure \ref{fig:4}). 
\begin{figure}
	\centering
	\includegraphics[trim=5.2cm 19.5cm 7.2cm 4.4cm, width=0.70\textwidth,clip]{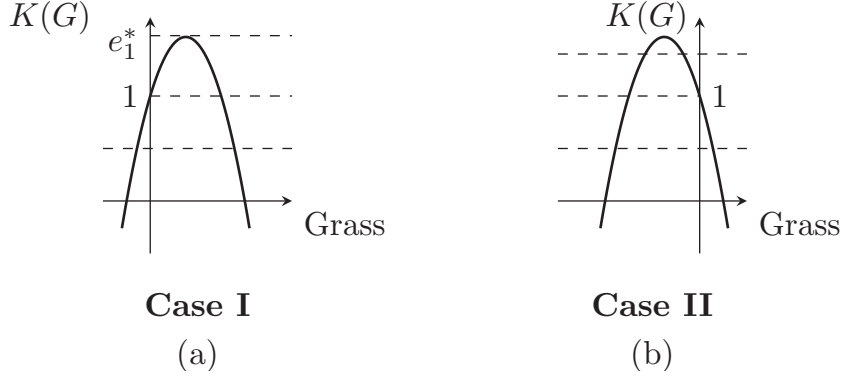}
	\caption{Schematics representations of parameter regimes for the positive coexistence steady state $E_{GS}$. Case I corresponds to the scenario where marsh vegetation is very efficient at reducing erosion and $e_{3}<\frac{1-\delta}{1+p_{1}}$. The horizontal values represent various values of $e_{1}$. We see that for $e_{1}>1$, we have two real, positive steady states. They eventually collide and disappear in a saddle node bifurcation $e_{1}^{*}$ . For Case II, the case of less efficient vegetation, we see that for $e_{1}\ge 1$, there are no positive steady states and for $e_{1}<1$, there is only one. }
	\label{fig:4}
\end{figure}
\\ \textbf{Case I}.
The first case corresponds to $B+\delta<0$ and a more physically realistic parameter regime where grass is more effective at attenuating erosion. In this parameter regime, $\delta$ is smaller and therefore, the erosion rate decays faster as a function of grass. From (\ref{eq:turning}), it is clear that $\tilde{G}_{-}<0<\tilde{G}_{+}$ and there exists only one peak for positive values of G, given by the value of $\tilde{G}_{+}$. Further, since $K(0)=1$, this means that for $e_{1}<1$, there exists only one positive steady state and for $e_{1}>1$, there exist two positive steady states  (Figure \ref{fig:4}a). The two positive steady states collide and annihilate each other at the saddle-node bifurcation point $e_{1}^*$, given by:
\begin{equation*}
\begin{aligned}
e_{1}^*=K(\tilde{G}_{+})= \frac{2e_{3}A(\delta^2+B\delta-Ae_{3})+(2Ae_{3}-B\delta)\sqrt{A^2e_{3}^2-Ae_{3}\delta(B+\delta)}}{\delta^2\sqrt{A^2e_{3}^2-Ae_{3}\delta(B+\delta)}}.
\end{aligned}
\end{equation*}
\\ \textbf{Case II} Case II corresponds to $B+\delta\ge 0$, a parameter regime in which cordgrass is less effective at attenuating sediment erosion. From (\ref{eq:turning}), it is clear that $\tilde{G}_{-}<\tilde{G}_{+}<0$ and there exist no peaks for positive values of G. Therefore, since $K(0)=1$, for $e_{1}<1$, we have one positive steady state, while for $e_{1}>1$ there is no positive steady state (Figure \ref{fig:4}b).
\end{proof}
Now that we know how many positive steady states can be expected, we evaluate their stability and obtain the following theorem:
\begin{theorem}\label{theorem2}
	Suppose that $p_{1},e_{1}, e_{3}, \phi>0$, $0<\delta<1$ and let $A$ and $B$ be defined as in (\ref{eq:AandB}). For the positive steady states $E_{GS}=(G^*_{\pm},S^*_{\pm})$ defined as: 
\begin{equation}
\begin{aligned}
G_{\pm}^*&=\frac{-(e_{1}\delta+B)\pm \sqrt{(e_{1}\delta+B)^2-4Ae_{3}(e_{1}-1)}}{2A},\\
S_{\pm}^*&=\frac{G_{\pm}^*+e_{3}}{\delta G_{\pm}^*+e_{3}},
\end{aligned}
\label{eq:steady}
\end{equation}
we have the following cases:
	\begin{enumerate}
		\item \textbf{(Case I)}\, Let $B+\delta<0$. For $1<e_{1}<e_{1}^*$ where $e_{1}^*$ is defined as in (\ref{eq:bifurcation_point}),	the high density positive steady state $(G^{*+},\frac{1}{L(G_{+}^*)})$ is locally asymptotically stable and the low density positive steady state $(G_{-}^*,\frac{1}{L(G_{-}^*)})$ is unstable. For $0<e_{1}\le 1$, there is only one positive steady state $(G^{*+},\frac{1}{L(G_{+}^*)})$ which is locally asymptotically stable.
		\item \textbf{(Case II)}\,  Let  $B+\delta \ge 0$. Then, for all $0<e_{1}<1$, the unique positive steady state $(G_{+}^*,\frac{1}{L(G_{+}^*)})$ is locally asymptotically stable.
	\end{enumerate}
\end{theorem}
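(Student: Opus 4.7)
The plan is to apply the Routh--Hurwitz criterion to the Jacobian of (\ref{eq:3}) at a coexistence state, and then to convert the resulting sign condition on $\det J$ into the geometric statement ``$K'(G^*) < 0$,'' where $K$ is the function from the proof of Theorem \ref{theorem1}. Once this equivalence is in hand, the stability on each branch can be read off directly from the shape of $K$ already analyzed there.

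First I would compute the Jacobian at $(G^*, S^*)$. Using the equilibrium identities $G^* = F(S^*)$ and $L(G^*) S^* = 1$, the Jacobian reduces to
\begin{equation*}
J = \begin{pmatrix} -G^* & G^* F'(S^*) \\ -\phi S^* L'(G^*) & -\phi L(G^*) \end{pmatrix},
\end{equation*}
so $\operatorname{tr} J = -G^* - \phi L(G^*) < 0$ is automatic and only $\det J > 0$ needs to be established. A direct calculation gives $\det J = G^* \phi \bigl[L(G^*) + S^* F'(S^*) L'(G^*)\bigr]$; since $F'(S^*) > 0$ and $L'(G^*) < 0$ (the latter uses $\delta < 1$), the two terms in the bracket have competing signs, so the sign must be extracted indirectly.

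The hard part is matching this sign against $K'(G^*)$. My plan is to differentiate the equilibrium equations $G = F(S)$ and $L(G) S = 1$ implicitly in $e_1$, solve the resulting linear system in $dG^*/de_1, dS^*/de_1$, and obtain
\begin{equation*}
\frac{dG^*}{de_1} \;=\; \frac{-L(G^*)}{(S^*+p_1)\bigl[L(G^*) + S^* F'(S^*) L'(G^*)\bigr]}.
\end{equation*}
Since $K(G^*) \equiv e_1$ holds along the coexistence branch, one also has $K'(G^*)\cdot dG^*/de_1 = 1$, and combining the two identities yields $\operatorname{sgn}(\det J) = -\operatorname{sgn}(K'(G^*))$. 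Thus the coexistence state is locally asymptotically stable precisely when $K'(G^*) < 0$ and is a saddle (in particular unstable) when $K'(G^*) > 0$. The algebra is short, but must be arranged so the common positive prefactors $G^*, \phi, S^*+p_1, L(G^*)$ cancel cleanly.

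Finally I would read off the sign of $K'(G^*)$ from the graphical picture already established in the proof of Theorem \ref{theorem1}. In Case I ($B+\delta < 0$), the function $K$ is unimodal on the relevant interval with $K(0) = 1$ and a unique maximum at $\tilde G_+ > 0$: for $1 < e_1 < e_1^*$ the two roots satisfy $G_-^* < \tilde G_+ < G_+^*$, giving $K'(G_-^*) > 0$ (unstable lower state) and $K'(G_+^*) < 0$ (stable upper state), while for $0 < e_1 \le 1$ the unique positive root lies on the decreasing branch, so $K'(G_+^*) < 0$ and it is stable. In Case II ($B+\delta \ge 0$), both turning points $\tilde G_\pm$ in (\ref{eq:turning}) are nonpositive and $K'(0) \le 0$, so $K$ is strictly decreasing for $G > 0$; hence $K'(G_+^*) < 0$ and the unique positive steady state is stable, completing all the cases of the theorem.
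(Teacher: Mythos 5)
Your proposal is correct, and it arrives at the same pivotal reduction as the paper --- $\Tr\mathbf{J}<0$ holds automatically, so everything hinges on showing that $\Det\mathbf{J}$ has the opposite sign to $K'(G^*)$ --- but you execute both halves of that reduction by genuinely different means. For the sign identity, the paper substitutes the equilibrium relations directly into $\Det\mathbf{J}$ and massages it algebraically into $\phi G^*\bigl(-K'(G^*)L^2(G^*)\bigr)/\bigl(1+L(G^*)p_{1}\bigr)$, whereas you obtain the same conclusion by implicitly differentiating the equilibrium equations in $e_{1}$ and combining with $K'(G^*)\,dG^*/de_{1}=1$; this is the classical ``stability changes at a fold of the bifurcation diagram'' principle and is arguably more conceptual, though you should note that $dG^*/de_{1}$ is undefined exactly where $\Det\mathbf{J}=0$, so the identity is derived off the fold and then extended (harmless here, since the theorem takes $e_{1}<e_{1}^*$ strictly and both sides are rational in $G^*$). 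For the sign of $K'$ on each branch, the paper evaluates $K'$ at the explicit quadratic-formula roots $G_{\pm}^*$ and verifies the inequalities \eqref{eq:Kplus}--\eqref{eq:Kminus} by direct estimation, while you read the signs off the shape of $K$ already established in Theorem \ref{theorem1} ($K(0)=1$ with a single interior maximum at $\tilde{G}_{+}>0$ in Case I, strict decrease for $G>0$ in Case II), which avoids that algebra at the small cost of justifying $G_{-}^*<\tilde{G}_{+}<G_{+}^*$ (immediate from monotonicity and the intermediate value theorem). Both routes are sound; yours is shorter and more geometric, the paper's is more computationally self-contained.
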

\begin{proof}
We first evaluate the Jacobian matrix $\mathbf{J}$ of system (\ref{eq:3}) at the positive steady state $E_{GS}$. This is given by:
\[
\mathbf{J}(E_{GS})=
\begin{bmatrix}
-G^* & G^* F'(\frac{1}{L(G^*)})   \\
-\phi \frac{1}{L(G^*)}L'(G^*)& -\phi L(G^*)
\end{bmatrix}.
\]
This is just the general form of the Jacobian evaluated at the positive steady state type. From Theorem \ref{theorem1}, we can have either two such positive states (high and low) or just one, depending on the parameter regime. We will consider both cases in this proof. We note the special form of the Jacobian matrix, reflecting the cooperative nature of our system:
\[
\mathbf{J}=
\begin{bmatrix}
- & +   \\
+& -
\end{bmatrix}.
\]
From $\mathbf{J}$, we can define the trace and determinant as follows:
\begin{equation*}
\begin{aligned}
\Tr{\mathbf{J}}&=-G^* -\phi L(G^*),\\
\Det{\mathbf{J}}&=\phi G^* L(G^*)+\phi \frac{1}{L(G^*)} G^* L'(G^*)F'\Big(\frac{1}{L(G^*)}\Big).
\end{aligned}
\end{equation*}
In order for $E_{GS}$ to be locally asymptotically stable, we need $\Tr{\mathbf{J}}<0$ and $\Det{\mathbf{J}}>0$. Since $L(G^*) \ge 0$ and $G^*$ is a positive quantity, the trace of $\mathbf{J}$ is always negative. Note that since $\Tr{\mathbf{J}} <0$, a Hopf bifurcation cannot occur from the positive steady state. 
Therefore, to assess stability, we need to determine the sign of $\Det{\mathbf{J}}$. Using (\ref{eq:parabola2}) and (\ref{eq:slope}), we can rewrite $\Det{\mathbf{J}}$ in terms of $K'(G)$ to obtain: 
\begin{equation*}
\begin{aligned}
\Det{\mathbf{J}}&=\phi G^*\Big(L(G^*)+\frac{1}{L(G^*)} L'(G^*)F'(\frac{1}{L(G^*)})\Big)\\
&=\phi G^*\Big(L(G^*)+\frac{1}{L(G^*)} L'(G^*)\frac{p_{1}+e_{1}}{(\frac{1}{L(G^*)} +p_{1})^2}\Big)\\
&=\phi G^*\Big(L(G^*)+\frac{1}{L(G^*)} L'(G^*)\frac{p_{1}+\frac{1-G^*}{L(G^*)} -p_{1}G^*}{(\frac{1}{L(G^*)} +p_{1})^2}\Big)\\
&=\phi G^*\Big(L(G^*)+\frac{L'(G^*)(1-G^*)}{1+L(G^*)p_{1}}\Big)\\
&=\phi G^*\Big(\frac{-K'(G^*)L^2(G^*)}{1+L(G^*)p_{1}}\Big).
\end{aligned}
\label{eq:determinant}
\end{equation*}
From equation (\ref{eq:parabola2}), we can solve the steady states explicitly as in \eqref{eq:steady}.
We now consider two cases from Theorem \ref{theorem1}. For Case I, both low and high steady states $(G_{+}^*,S_{+}^*)$ and  $(G_{-}^*,S_{-}^*)$ are positive, while for Case II, only the high positive steady state $(G_{+}^*,S_{+}^*)$ is positive. These are the steady states we consider and assess their stability.
\\ \textbf{Case I} For Case I ($B+\delta < 0$), we have the following scenarios:
\begin{itemize}
    \item \textbf{(i)} For $1<e_{1}< e_{1}^*$, there are two positive steady states  $(G_{+}^*,S_{+}^*)$ and $(G_{-}^*,S_{-}^*)$. From the definitions of the steady states in (\ref{eq:steady}), it follows that
   \begin{equation} 
   \begin{aligned}
   -e_{1}\delta-B&>\sqrt{(e_{1}\delta+B)^2-4Ae_{3}(e_{1}-1)}>0,\\
   (e_{1}\delta+B)^2&>4A(e_{1}e_{3}-e_{3}).
   \end{aligned}
   \label{cond_case1}
    \end{equation}
    Evaluating $K'(G)$ from equation (\ref{eq:slope}) at $G=G_{+}^*$ and $G=G_{-}^*$  yields:
\begin{equation}
\begin{aligned}
K'(G_{+}^*)=\frac{-2\delta(-e_{1}\delta-B)^2-2\delta (-e_{1}\delta -B) C +8 \delta A e_{3}(e_{1}-1)-4Ae_{3}C}{4A(\delta G_{+}^*+e_{3})^2},\\
K'(G_{-}^*)=\frac{-2\delta(-e_{1}\delta-B)^2+2\delta (-e_{1}\delta -B) C +8 \delta A e_{3}(e_{1}-1)+4Ae_{3}C}{4A(\delta G_{-}^*+e_{3})^2},
\end{aligned}
\label{eq:Keval}
\end{equation}
with 
    \begin{equation*}
C=\sqrt{(e_{1}\delta+B)^2-4Ae_{3}(e_{1}-1)}>0.
\end{equation*}
Using conditions from (\ref{cond_case1}), we can show:
\begin{equation}
\begin{aligned}
K'(G_{+}^*)&=\frac{-2\delta(-e_{1}\delta-B)^2-2\delta (-e_{1}\delta -B) C +8 \delta A e_{3}(e_{1}-1)-4Ae_{3}C}{4A(\delta G_{+}^*+e_{3})^2}\\
&<\frac{-2\delta(-e_{1}\delta-B)^2-2\delta (-e_{1}\delta -B) C +2 \delta (e_{1}\delta+B)^2-4Ae_{3}C}{4A(\delta G_{+}^*+e_{3})^2}\\
&=\frac{-2\delta (-e_{1}\delta -B) C-4Ae_{3}C}{4A(\delta G_{+}^*+e_{3})^2}<0,
\end{aligned}
\label{eq:Kplus}
\end{equation}
and
\begin{equation}
\begin{aligned}
K'(G_{-}^*)&=\frac{-2\delta(-e_{1}\delta-B)^2+2\delta (-e_{1}\delta -B) C +8 \delta A e_{3}(e_{1}-1)+4Ae_{3}C}{4A(\delta G_{-}^*+e_{3})^2}\\
&>\frac{-2\delta(-e_{1}\delta-B)^2+2\delta C^2 +8 \delta A e_{3}(e_{1}-1)+4Ae_{3}C}{4A(\delta G_{-}^*+e_{3})^2}\\
&=\frac{4Ae_{3}C}{4A(\delta G_{-}^*+e_{3})^2}>0.
\end{aligned}
\label{eq:Kminus}
\end{equation}
Therefore, since $K'(G)<0$ on the $(G_{+}^*,S_{+}^*)$ branch, $\Det{\mathbf{J}}$ evaluated at $(G_{+}^*,S_{+}^*)$ is positive and $(G_{+}^*,S_{+}^*)$ is locally asymptotically stable.
Similarly, since $K'(G)>0$ on the $(G_{-}^*,S_{-}^*)$ branch, $\Det{\mathbf{J}}$ evaluated at $(G_{-}^*,S_{-}^*)$ is negative and $(G_{-}^*,S_{-}^*)$ is unstable.
 \item \textbf{(ii)} For $0<e_{1}\le 1$, there is only one positive steady state branch corresponding to $(G_{+}^*, \frac{1}{L(G_{+}^*)})$. Further, we can show that
 \begin{equation*}
0<-\delta-B<-e_{1}\delta-B.
\end{equation*}
From (\ref{eq:Keval}), it then follows that $K'(G_{+}^*)<0$. Since $\Det{\mathbf{J}}$ evaluated at $(G_{+}^*,S_{+}^*)$ is positive, $(G_{+}^*,S_{+}^*)$ is locally asymptotically stable.
\end{itemize} 
\textbf{Case II} For Case II ($B+\delta \ge 0$) , there is a unique positive steady state branch corresponding to $(G_{+}^*, S_{+}^*)$. From equation (\ref{eq:slope}) it is clear that $K'(G)<0$ for all positive values of $G$. Therefore, since $\Det{\mathbf{J}}$ evaluated at $(G_{+}^*,S_{+}^*)$ is positive, the steady state $(G_{+}^*,S_{+}^*)$ is locally asymptotically stable.
\end{proof}
\begin{figure}
	\centering
     \includegraphics[width=1\textwidth]{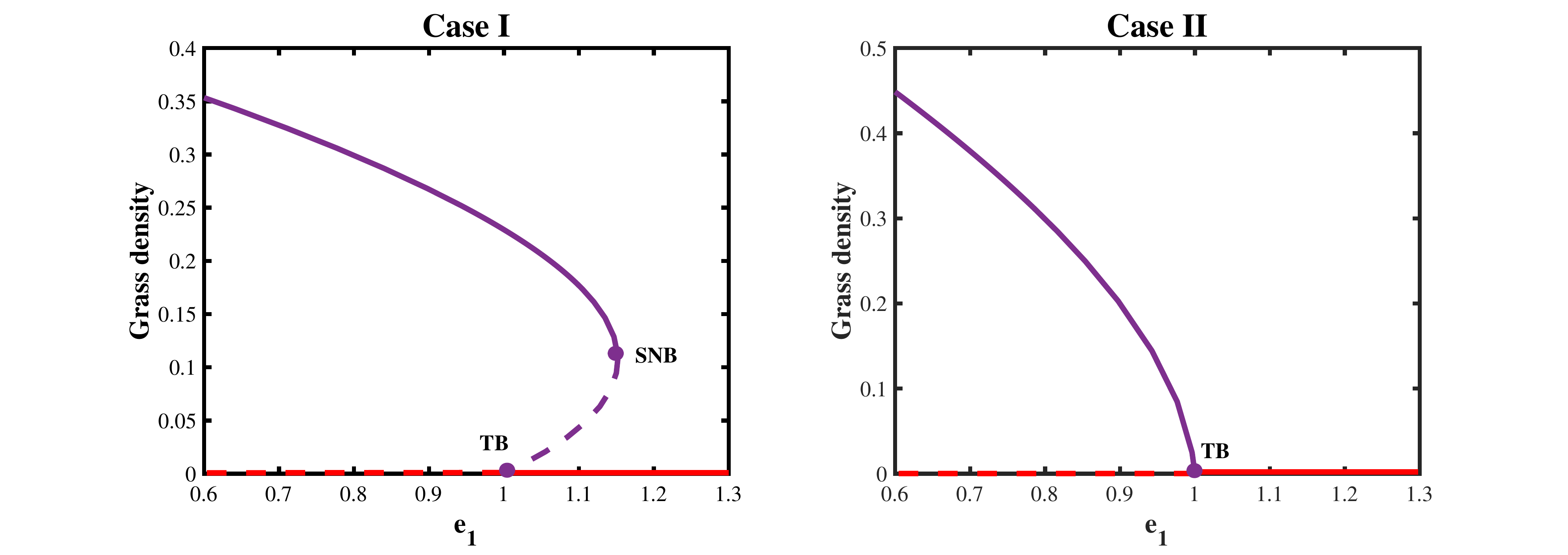}
	\caption{Bifurcation diagrams plotted using MatCont \citep{matcont} for Case I with a saddle-node (SNB) bifurcation happening at $e_{1}=1.16$ and a transcritical (TB) bifurcation happening at $e_{1}=1$ and Case II with only a transcritical bifurcation happening at $e_{1}=1$. Parameters used in Case I: $p_{1}=3.5,\phi=0.14,e_{3}=0.1140,\delta=1/7$. Parameters used in Case II: $p_{1}=0.5,\phi=0.14,e_{3}=0.5,\delta=0.3$.}
	\label{fig:5}
\end{figure}
The results from Proposition \ref{proposition}, Theorem \ref{theorem1} and Theorem \ref{theorem2} are summarized in Figure \ref{fig:5}. In Case I, the system displays bistability for values $1<e_{1}<e_{1}^*$, where both the high positive steady state and the trivial steady state are stable, separated by an unstable positive steady state branch. The two positive steady states then merge in a saddle-node bifurcation at $e_{1}^*$, after which only the stable trivial steady state $E_{S}$ remains. Bistability is not surprising given the highly cooperative nature of this system and large role that the grass plays in erosion mitigation. In Case II, which corresponds to the scenario where grass is less effective at attenuating erosion, the unique stable positive state gradually decreases and eventually undergoes a transcritical bifurcation at $e_{1}=1$ at which it exchanges stability with the trivial steady state $E_{S}$. Note that the positive steady state in Case II ceases to exist for smaller values of $e1$ than in Case I. This is intuitive as Case I corresponds to a more cooperative parameter regime that makes population persistence more possible.

\subsection{Generalized Cooperative System with Nonlocal Interactions}
\label{sec:5}
We now consider the spatially extended system to investigate the emergence of a patterned solution. Given the complexity of the spatially extended system (\ref{eq:2}), we carry out a steady state biharmonic approximation of this system, allowing us to perform linear stability analysis on the approximated system and gain insight into the dynamics of the original system (\ref{eq:2}) \citep{murray2001mathematical, d2006patterns, borgogno2009mathematical}.

Let's consider the following general form of our system (\ref{eq:2}):
\begin{equation}
\left \{ \begin{aligned}
\partial_t u&=d_{11}\partial^2_x u+f(u,v), &x\in \mathbb{R}, t>0,\\
\partial_t v&=d_{22}\partial^2_x v+g(u,v)+\lambda v\int_{-\infty}^{\infty} P(x-x') u(x') dx',&x\in \mathbb{R}, t>0,\\
u(x,0)&=u_{0}(x)\ge 0,\,\,\, v(x,0)=v_{0}(x)\ge 0, &x\in \mathbb{R}.
\end{aligned} \right.
\label{eq:subsystem_general}
\end{equation}
where $P(x)$ is defined the same as in (\ref{kernel}), and $f,g$ are general smooth functions.
Following standard procedure, we assume the kernel has a limited effect at relatively large distances and perform a Taylor's expansion of the integral term around $x'=x$ \citep[pages 482-489]{murray2001mathematical}:
\begin{equation*}
\begin{aligned}
&\int_{-\infty}^{\infty} P(x-x') u(x') dx'=\int_{-\infty}^{\infty} P(z) u(x-z) dz\\
=&\int_{-\infty}^{\infty} P(z) \Big[u(x)-z\frac{\partial u(x)}{\partial x}+\frac{z^2}{2!}\frac{\partial^2 u(x)}{\partial x^2}-\frac{z^3}{3!}\frac{\partial^3 u(x)}{\partial x^3}+\frac{z^4}{4!}\frac{\partial^4 u(x)}{\partial x^4}-\cdots \Big]dz.
\end{aligned}
\end{equation*}
This is a reasonable assumption in the context of our model as the scale-dependent grass-sediment feedback is thought to occur on a relatively small spatial scale ($1-4$ meters). We can then define the moments $P_{m}$ in the following way:
\begin{equation*}
\begin{aligned}
P_{m}=\frac{1}{m!}\int_{-\infty}^{\infty}z^{m}P(z)dz, \,\, m=0,1,2,\cdots
\end{aligned}
\end{equation*}
Given the symmetry of the kernel $P(x)$, the odd-power moments vanish, as does $P_{0}$ since $\int_{-\infty}^{\infty} P(x')dx'=0$.
From the specific form of the Mexican-hat kernel in (\ref{kernel}), we can obtain exact expressions for $P_{2}$ and $P_{4}$ in term of the variances $\sigma_{1}$ and $\sigma_{2}$ of the excitatory and inhibitory effects, respectively:
\begin{equation}
\begin{aligned}
P_{2}=\frac{\sigma_{1}^2-\sigma_{2}^2}{2}<0, \,\,P_{4}=\frac{\sigma_{1}^4-\sigma_{2}^4}{8}<0.
\end{aligned}
\label{kernel_definitions}
\end{equation} 
Truncating the expansion at the fourth partial derivative, the original system (\ref{eq:subsystem_general}) can now be approximated by the following biharmonic system \citep{bates1996transition,bates1997heteroclinic,couteron2001periodic}:
\begin{equation}
\left \{ \begin{aligned}
\partial_t u&=d_{11}\partial^2_x u+f(u,v),&x\in \mathbb{R}, t>0,\\
\partial_t v&=d_{22}\partial^2_x v+g(u,v)+\lambda v(P_{2}\partial^2_x u+P_{4}\partial^4_x u),&x\in \mathbb{R}, t>0,\\
u(x,0)&=u_{0}(x,0)\ge 0,\,\,\, v(x,0)=v_{0}(x,0)\ge 0, &x\in \mathbb{R}.
\end{aligned} \right.
\label{eq:subsystem_approx}
\end{equation}
\\In this way, the evolution of $u$ and $v$ now depends not only on their own diffusion as in the classic reaction-diffusion system, but also on the additional short-range cross-diffusion $\partial^2_x u$ and long-range cross-diffusion $\partial^4_x u$ terms. Here, $\lambda P_{2}$ and $ \lambda P_{4}$ represent the corresponding cross-diffusion coefficients. We are interested in the conditions that lead to the emergence of a spatially patterned solution in such a system. In general, spatial patterns arise in such systems through Turing instability, a symmetry breaking mechanism in which an otherwise stable spatially uniform steady state is destabilized by the addition of diffusion and cross-diffusion terms. To derive conditions for such an instability, we perform a classic Turing type linear stability analysis on the approximated system (\ref{eq:subsystem_approx}). 

We expand our system (\ref{eq:subsystem_approx}) about a spatially uniform positive steady state $(u^*,v^*)$ with $u^*>0$ and $v^*>0$. Substituting
\begin{align*}
u(x,t)&=u^{*}+w_{1}(x,t), \,\,\,\,\, |w_{1}(x,t)|\ll u^{*}, \nonumber\\
v(x,t)&=v^{*}+w_{2}(x,t),\,\,\,\,\, |w_{2}(x,t)|\ll v^{*},
\end{align*}
into (\ref{eq:subsystem_approx}) and dropping any nonlinear terms, the resulting linearized system about $(u^*,v^*)$ becomes:
\begin{equation}
\begin{aligned}
\mathbf{W_{t}}=\mathbf{J}\mathbf{W}+\mathbf{D}\nabla^2\mathbf{W}+\mathbf{H}\nabla^4\mathbf{W},
\end{aligned}
\label{eq:matrixsimp}    
\end{equation}
with
\begin{equation}
\mathbf{W}(x,t)=\begin{pmatrix}\, w_{1}(x,t)\,\\\,w_{2}(x,t)\,\end{pmatrix},\,\,\mathbf{D}=\left( \begin{array}{cc}
d_{11}	&  0\\ 
d_{21}	& d_{22} \end{array} \right),\, \mathbf{H}=\left( \begin{array}{cc}
0	&  0\\ 
h_{1}	& 0 \end{array} \right),
\mathbf{J} =	\left( \begin{array}{cc}
f_{u}	&  f_{v}\\ 
g_{u}& g_{v}
\end{array} \right) \Bigg\rvert_{(u^*,v^*)},
\label{eq:assumptions}
\end{equation}
where
\begin{align}
d_{21}=\lambda v^{*}P_{2}<0, \, \, h_{1}=\lambda v^{*}P_{4}<0; \, \, d_{11},d_{22}>0.
\label{eq:diffusionterm}
\end{align}
Here, we consider a cooperative form of $\mathbf{J}$ with $f_{u},g_{v}<0$ and $f_{v},g_{u}>0$:
\begin{equation}
\mathbf{J}=\left( \begin{array}{cc}
-	&  +\\ 
+& -
\end{array} \right),
\label{eq:cooperative_jacobian}
\end{equation}
 Note that this is different from the classic Turing model activator-inhibitor form of  $\mathbf{J}$ where $f_{u}$ and $g_{v}$ are of opposite sign.

Following standard convention, we let
\begin{equation}
\boldsymbol{W}(x,t)= \begin{pmatrix}\, w_{1}(x,t)\,\\\,w_{2}(x,t)\,\end{pmatrix}=\begin{pmatrix}\, a\,\\\,b\,\end{pmatrix}e^{\alpha t+ikx}.
\label{eq:stdconv}     
\end{equation}
Here, $a$ and $b$ are constants, and $k$ is the corresponding wavenumber, with $1/k$ being proportional to the wavelength of the emergent patterns. Since $e^{ikx}$ is periodic and bounded, the sign of $\alpha$ plays an important role in determining whether these small perturbations away from the steady state will grow or decay.
\par Substituting (\ref{eq:stdconv}) into (\ref{eq:matrixsimp}) and
 looking for a nontrivial solution, we require
\begin{equation}
|\alpha \mathbf{I}-\mathbf{J}+k^2\mathbf{D}-k^4\mathbf{H}|=\left| \begin{array}{cc}
\alpha+d_{11}k^2-f_{u}	&  -f_{v}\\ 
-g_{u}+k^2d_{21}-k^4h_{1}	& \alpha+d_{22}k^2-g_{v}
\end{array} \right|=0.
\end{equation}
This yields the following dispersion relation:
\begin{equation}
\alpha^2-b(k^2)\alpha+c(k^2)=0,
\label{eq:dispersion}
\end{equation} 
where
\begin{equation}
\begin{aligned}
b(k^2)&=\Tr{\mathbf{J}}-k^2\Tr{\mathbf{D}},\\
c(k^2)&=(\Det{\mathbf{D}}-f_{v}h_{1})k^4-(d_{11}g_{v}+d_{22}f_{u}-f_{v}d_{21})k^2+\Det{\mathbf{J}}.
\end{aligned}
\label{eq:determ}
\end{equation}
Using this dispersion relation, we can then derive conditions for Turing type instability, summarized in the following theorem:

\begin{theorem}\label{theorem3}
	Let $(u^*,v^*)$ be a constant steady state solution of (\ref{eq:subsystem_approx}) with $\mathbf{D}$ defined as in (\ref{eq:assumptions}) with $d_{11},d_{22}>0$ and $P_{2}<0$ and $P_{4}<0$ defined in (\ref{kernel_definitions}). Also, let $\mathbf{J}$ be defined as in (\ref{eq:cooperative_jacobian}) with $f_{u},g_{v}<0$ and $f_{v},g_{u}>0$. If
	\begin{equation}
	\begin{aligned}
	\Det{\mathbf{J}}&>0,\\
	d_{11}g_{v}+d_{22}f_{u}-f_{v}\lambda v^*P_{2}&>2\sqrt{\Det(\mathbf{D})-f_{v}\lambda v^*P_{4}}\sqrt{\Det(\mathbf{J})}>0,
	\end{aligned}
	\label{eq:pattern_condition}
	\end{equation}
	then $(u^*,v^*)$ is locally asymptotically stable with respect to the corresponding ODE system, but is unstable with respect to system (\ref{eq:subsystem_approx}).
\end{theorem}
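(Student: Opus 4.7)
The plan is to run the standard dispersion-relation argument on the biharmonic system \eqref{eq:subsystem_approx}, using the quadratic already derived in \eqref{eq:dispersion}--\eqref{eq:determ}. I will treat the constant-state stability with respect to the ODE and the spatial-mode instability separately, then show that the two sets of conditions together are exactly \eqref{eq:pattern_condition}.

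First I would address the ODE side. Setting $k=0$ in \eqref{eq:dispersion}--\eqref{eq:determ}, the eigenvalues of $\mathbf{J}$ satisfy $\alpha^2-\Tr(\mathbf{J})\alpha+\Det(\mathbf{J})=0$. From the cooperative sign pattern \eqref{eq:cooperative_jacobian} one has $\Tr(\mathbf{J})=f_u+g_v<0$ automatically, and the hypothesis $\Det(\mathbf{J})>0$ gives local asymptotic stability of $(u^*,v^*)$ as an equilibrium of the reaction part alone. This piece is essentially free.

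Next I would study the spatial instability. For any $k^2\geq 0$, the dispersion relation is $\alpha^2-b(k^2)\alpha+c(k^2)=0$ with $b(k^2)=\Tr(\mathbf{J})-k^2\Tr(\mathbf{D})$. Since $d_{11},d_{22}>0$ and $\Tr(\mathbf{J})<0$, we have $b(k^2)<0$ for all $k^2\geq 0$, which means the sum of the two roots is strictly negative; consequently a positive real part can only appear when the product $c(k^2)$ becomes negative, in which case one root is genuinely positive. Thus the instability criterion reduces to locating $k^2>0$ with $c(k^2)<0$. Viewed as a quadratic in $K:=k^2$,
\begin{equation*}
c(K)=\bigl(\Det(\mathbf{D})-f_vh_1\bigr)K^2-\bigl(d_{11}g_v+d_{22}f_u-f_vd_{21}\bigr)K+\Det(\mathbf{J}),
\end{equation*}
has positive leading coefficient: indeed $\Det(\mathbf{D})=d_{11}d_{22}>0$, and $-f_vh_1=-f_v\lambda v^*P_4>0$ because $f_v>0$ and $P_4<0$, so $\Det(\mathbf{D})-f_vh_1>0$. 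Therefore $c(K)$ attains its global minimum at $K_c=(d_{11}g_v+d_{22}f_u-f_vd_{21})/\bigl(2(\Det(\mathbf{D})-f_vh_1)\bigr)$, and $\min_K c(K)<0$ iff this critical $K_c$ is positive and the discriminant of $c$ is positive, i.e.
\begin{equation*}
d_{11}g_v+d_{22}f_u-f_vd_{21}>0\quad\text{and}\quad \bigl(d_{11}g_v+d_{22}f_u-f_vd_{21}\bigr)^2>4\bigl(\Det(\mathbf{D})-f_vh_1\bigr)\Det(\mathbf{J}).
\end{equation*}
Taking positive square roots and substituting $d_{21}=\lambda v^*P_2$, $h_1=\lambda v^*P_4$ collapses these two inequalities into the single chained inequality in \eqref{eq:pattern_condition}. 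One then concludes that there is an open interval of wavenumbers $K$ on which $c(K)<0$, hence an unstable mode, which completes the proof.

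The main obstacle is not conceptual but bookkeeping: one has to be careful that the sign conventions $f_u,g_v<0$, $f_v,g_u>0$, $P_2,P_4<0$ force $\Det(\mathbf{D})-f_vh_1$ to be strictly positive (so the quadratic opens upward and the square root is real), and that the lower bound $2\sqrt{\Det(\mathbf{D})-f_vh_1}\sqrt{\Det(\mathbf{J})}>0$ in \eqref{eq:pattern_condition} simultaneously enforces positivity of $K_c$ and the discriminant condition. Once these sign checks are in place, the equivalence of \eqref{eq:pattern_condition} with the existence of a destabilizing Fourier mode is immediate, and the standard mode-by-mode argument via the ansatz \eqref{eq:stdconv} upgrades this to instability of $(u^*,v^*)$ in the PDE \eqref{eq:subsystem_approx}.
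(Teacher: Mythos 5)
Your proposal is correct and follows essentially the same route as the paper: stability at $k^2=0$ from $\Tr(\mathbf{J})<0$ (automatic for the cooperative sign pattern) together with $\Det(\mathbf{J})>0$, then reducing spatial instability to $c(k^2)<0$ for some $k^2>0$ because $b(k^2)<0$ forces the real parts to be controlled by the product of the roots, and finally minimizing the upward-opening quadratic $c$ to recover the chained inequality in \eqref{eq:pattern_condition}. Your explicit sign check that $\Det(\mathbf{D})-f_vh_1>0$ is a welcome, slightly more careful version of the step the paper takes for granted.
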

\begin{proof}
The solution to (\ref{eq:dispersion}) yields:
	\begin{align}
	\alpha_{\pm}(k^2)=\frac{b(k^2)\pm \sqrt{[b(k^2)]^2-4c(k^2)}}{2}.
	\label{eq:solutiondispersion}
	\end{align}
Note that $k^2=0$ corresponds to the spatially homogeneous case. For Turing instability, we require the spatially homogeneous state $(u^*,v^*)$ to be stable in the absence of spatial variation ($k^2=0$). Therefore, for $k^2=0$, the eigenvalues given in (\ref{eq:solutiondispersion}) have to be negative. This occurs when the trace of $\mathbf{J}$ is negative and the determinant of $\mathbf{J}$ is positive. From the special form of our matrix $\mathbf{J}$ in (\ref{eq:cooperative_jacobian}), it is clear that the trace of $\mathbf{J}$ is always negative, and from the first assumption in \eqref{eq:pattern_condition} know that the determinant of $\mathbf{J}$ is positive. So $(u^*,v^*)$ is locally asymptotically stable with respect to the corresponding ODE system.

For the emergence of a non-constant spatially periodic solution, we further require that for some $k^2 \ne 0$, Re$(\alpha_+(k^2))>0$, guaranteeing that the perturbation will grow. Since $\Tr{\mathbf{D}}>0$ and $\Tr{\mathbf{J}}<0$, a necessary but not sufficient condition is that
   \begin{equation*}
   c(k^2)<0 \,\,\,\,\,\text{for some $k^2$} \in \R_{+}.
   \end{equation*}  
 This happens as long as the following condition is satisfied:
    \begin{align}
   d_{11}g_{v}+d_{22}f_{u}-f_{v}d_{21}>0.
   \label{eq:turing_condition1}
\end{align}
Under the condition \eqref{eq:turing_condition1}, the minimum of $c(k^2)$ is achieved at some $k^2=k_m^2>0$. Minimizing $c(k^2)$ with respect to $k^2$ yields:
\begin{equation}
\begin{aligned}
  	c_{min}=\min_{k^2}c(k^2)=c(k_m)=\Det(\mathbf{J})-\frac{(  d_{11}g_{v}+d_{22}f_{u}-f_{v}d_{21})^2}{4(\Det(\mathbf{D})-f_{v}h_{1})}, \,\, k_{m}^2=\frac{  d_{11}g_{v}+d_{22}f_{u}-f_{v}d_{21}}{2(\Det(\mathbf{D})-f_{v}h_{1})}.
  	\end{aligned}
  	\label{eq:minimum}
  	\end{equation}
  	Guaranteeing that $c_{min}<0$, we then have the following final condition:
  	\begin{equation}
   	\begin{aligned}
 d_{11}g_{v}+d_{22}f_{u}-f_{v}d_{21}>2\sqrt{\Det{\mathbf{D}}-f_{v}h_{1}}\sqrt{\Det{\mathbf{J}}}.
  	\end{aligned}
  	\label{eq:turing_conditions}
  	\end{equation}
\end{proof}

Now, the same range of wavenumbers $k$ that makes $c(k^2)<0$ in (\ref{eq:determ}) also guarantees that Re$(\alpha(k^2))>0$. We can further calculate the relevant range of wavenumbers $k_{-}^{2}<k^2<k_{+}^2$ by computing the zeros of the function $c(k^2)$ such that $c(k_-^2)=c(k_+^2)=0$. Then
\begin{equation}
\begin{aligned}
k_{-}^2=\frac{B(J,D)-\sqrt{B(J,D)^2-4\Det{J}(\Det{D}-f_{v}h_{1})}}{2(\Det{D}-f_{v}h_{1})}<k^2\\
<\frac{B(J,D)+\sqrt{B(J,D)^2-4\Det{J}(\Det{D}-f_{v}h_{1})}}{2(\Det{D}-f_{v}h_{1})}=k_{+}^2, 
\end{aligned}
\label{eq:interval}
\end{equation}
where
\begin{equation}
\begin{aligned}
B(J,D)=d_{11}g_{v}+d_{22}f_{u}-f_{v}\lambda v^*P_{2}
\end{aligned}
\end{equation}
The spatial patterns that emerge have a corresponding wavelength $\omega$, defined as 
\begin{equation*}
\begin{aligned}
\omega=\frac{2\pi}{k_{m}},
\end{aligned}
  	\end{equation*}
with $k_{m}$ defined in \eqref{eq:minimum} in the interval (\ref{eq:interval}) and the one for which the positive eigenvalue $\alpha_+(k^2)$ from (\ref{eq:solutiondispersion}) achieves a maximum, corresponding to the most unstable and fastest growing mode. 

\begin{figure}
\centering
\includegraphics[trim=0.9cm 0cm 0.9cm 0cm,width=0.98\textwidth,clip]{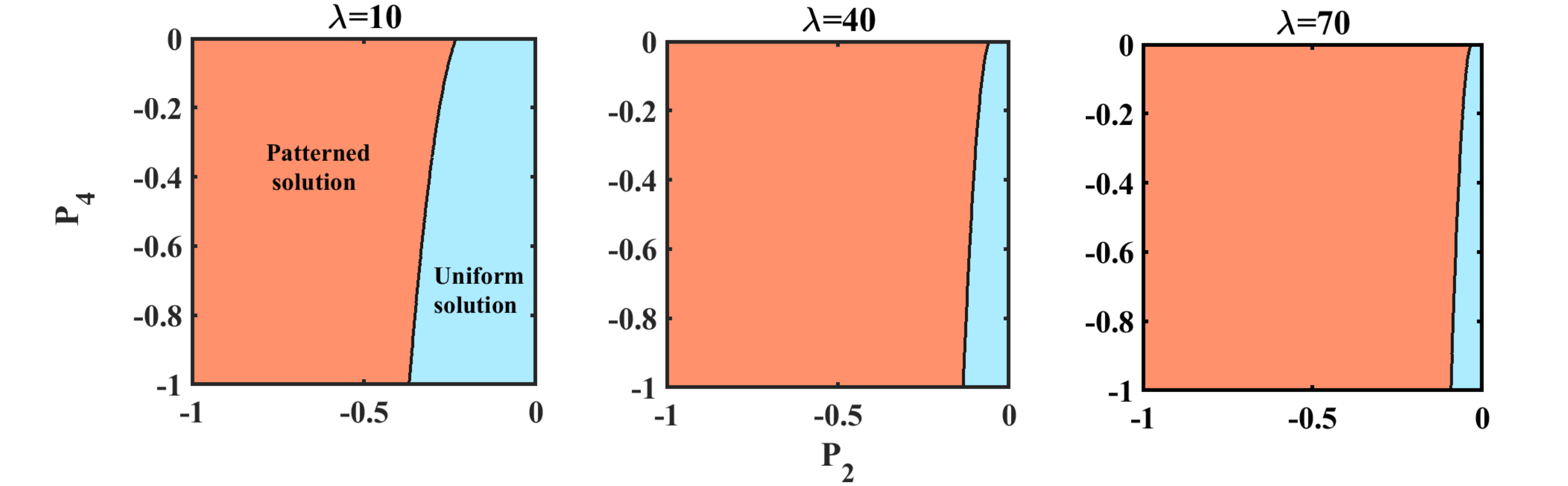}
\caption{Parameter space of Turing-like instability satisfying conditions (\ref{eq:turing_condition_space}) for various values of $P_{2}$ and $P_{4}$ with fixed $\lambda$. Note that increasing the value of $\lambda$, results in a larger parameter space. It is also clear that patterns are possible even in the absence of the fourth order term ($P_{4}=0$). The following parameters are used: $N=0.06,M=-0.13, \det{\mathbf{J}}= 0.004, \det{\mathbf{D}=0.024}$ derived from the original model with $e_{1}=1.05,p_{1}=3.5,\phi=0.14,e_{3}=0.1140,\delta=1/7,D_{G}=0.04,D_{S}=0.6$.}
\label{fig:6}
\end{figure} 

We can gain further insight into the result from Theorem \ref{theorem3} by visualizing the instability conditions in the $P_{2}P_{4}$-plane (Figure \ref{fig:6}). Letting
	\begin{equation}
	\begin{aligned}
	M=d_{11}g_{v}+d_{22}f_{u},\;\;  N&=f_{v}v^*,
	\end{aligned}
	\label{eq:M}
	\end{equation} 
the second stability condition from Theorem \ref{theorem3}  is equivalent to
\begin{equation}
\begin{aligned}
(M-\lambda NP_{2})^2-4\Det(\mathbf{J})(\Det(\mathbf{D})-\lambda NP_{4})>0.
\end{aligned}
\label{eq:turing_space}
\end{equation}
Further rearrangement of (\ref{eq:turing_space}) leads to the following condition for the instability of the uniform solution:
	\begin{equation}
	\begin{aligned}
P_{4}>\frac{-P_{2}^2N^2\lambda^2+2P_{2}MN\lambda-(M^2-4\Det(\mathbf{J})\Det(\mathbf{D}))}{4N\Det(\mathbf{J})\lambda}.
\end{aligned}
	\label{eq:turing_condition_space}
	\end{equation}
Additionally, we have
	\begin{equation}
   \begin{aligned}
	M^2-4\Det(\mathbf{D})\Det(\mathbf{J})=(d_{11}g_{v}-d_{22}f_{u})^2+4d_{11}d_{22}f_{v}g_{u}>0.
	\end{aligned}
	\label{eq:parabola_behavior}
	\end{equation}
Note that the first stability condition from \eqref{eq:pattern_condition} in Theorem \ref{theorem3} is independent of $P_{2},P_{4}$ and $\lambda$. Therefore, given that this first condition holds, rearranging the other instability condition in Theorem \ref{theorem3} in the form of (\ref{eq:turing_condition_space})  and using (\ref{eq:parabola_behavior}) as well as the fact that $M<0$ and $N>0$, it is clear that the instability region corresponds to the area to the left of the downward facing parabola defined on the right hand side of (\ref{eq:turing_condition_space}) in the fourth quadrant of the $P_{2}P_{4}$-plane (Figure \ref{fig:6}). Without the activator-inhibitor dynamics, a cooperative system cannot be destabilized by diffusion alone. Theorem \ref{theorem3} makes it clear that the additional cross-diffusion terms given by $P_{2}$ and $P_{4}$ make spatial heterogeneous patterns possible in this cooperative system. In particular, the cross-diffusion term with $P_{2}$ plays a crucial role in the pattern forming mechanism since in its absence ($P_{2}=0$),  the term $c(k^2)$ from the dispersion relation in (\ref{eq:dispersion}) can never be negative for any $k^2$. Since the biharmonic parameter $P_{4}$ acts as a stabilizing force, we also note that as its absolute value increases, the window for spatial patterns decreases (Figure \ref{fig:6}). Increasing the value of the strength parameter $\lambda$ offsets the effect of the biharmonic parameter $P_{4}$ and increases the size of the window in which spatial patterns are possible. We note that in the absence of the biharmonic long-range cross-diffusion term ($P_{4}=0$), the conditions for Turing instability can still be satisfied. In this case, the system is reduced to a special case of the reaction-diffusion model with cross-diffusion, for which Turing instability conditions have been previously derived \citep{madzvamuse2015cross}. 

Previous results in this section took into consideration the system on an infinite domain $\mathbb{R}$. In such a system, we will always find an unstable mode $k^2$ in the interval (\ref{eq:interval}) if the conditions in Theorem \ref{theorem3} are satisfied. Numerical simulations require the choice of a finite domain with specific boundary conditions. Therefore, we now consider the scenario on a bounded domain $T=(-l,l)$ with periodic boundary conditions, where the size of $T$ also affects the pattern formation. This is a more restrictive situation than the infinite domain scenario as the wavenumbers $k$ are now discrete and depend on the size of the domain. In this case, we shall understand that the solution $(u,v)$ on $T$ are periodically extended to $\mathbb{R}$ so the integral terms in the original system is still integrated on $\mathbb{R}$. 

The result for the bounded domain case is summarized in the following corollary:
\begin{corollary}\label{periodic_boundary}
	Consider (\ref{eq:subsystem_approx}) on a finite domain $T=\{x\in \mathbb{R}: -l < x<l\}$ and the following periodic boundary conditions:
	\begin{equation}\label{BC}
	\begin{aligned}
	u(-l,t)&=u(l,t),u_{x}(-l,t)=u_{x}(l,t),\\
	v(-l,t)&=v(l,t),v_{x}(-l,t)=v_{x}(l,t).
	\end{aligned}
	\end{equation}
	Let $(u^*,v^*)$ be a constant steady state solution of (\ref{eq:subsystem_approx}) with $\mathbf{D}$ defined as in (\ref{eq:assumptions}) with $d_{11},d_{22}>0$ and $P_{2}<0$ and $P_{4}<0$ defined in (\ref{kernel_definitions}). Also, let $\mathbf{J}$ be defined as in (\ref{eq:cooperative_jacobian}) with $f_{u},g_{v}<0$ and $f_{v},g_{u}>0$. If
	\begin{equation}
	\begin{aligned}
	\Det{\mathbf{J}}&>0,\\
		d_{11}g_{v}+d_{22}f_{u}-f_{v}\lambda v^*P_{2}-2\sqrt{\Det{\mathbf{J}}(\Det{\mathbf{D}}-f_{v}\lambda v^{*}P_{4})} &> \Big(\frac{\pi}{l}\Big)^2(\Det{\mathbf{D}}-f_{v}\lambda v^{*}P_{4}),
	\end{aligned}
	\label{eq:pattern_condition_finite}
	\end{equation}
then $(u^*,v^*)$ is locally asymptotically stable with respect to the corresponding ODE system, but is unstable with respect to system (\ref{eq:subsystem_approx}) on $T$ with boundary condition \eqref{BC}. Moreover the most unstable mode is given by $n\in \mathbb{N}$ such that 
\begin{equation}
    \alpha(k_m^2)=\alpha_+\left(\frac{n^2\pi^2}{l^2}\right)=\ds\max_{i\in {\mathbb N}}\alpha_+\left(\frac{i^2\pi^2}{l^2}\right),
\end{equation} where $\alpha_+(k^2)$ is defined in \eqref{eq:solutiondispersion}, and the corresponding wavelength is $\omega=2\pi/k_m=2l/n$. 
\end{corollary}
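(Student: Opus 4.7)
The plan is to reduce the claim to a discrete analogue of the dispersion analysis already carried out in Theorem~\ref{theorem3}. On $T=(-l,l)$ with the periodic boundary conditions \eqref{BC}, the eigenfunctions of $\partial_x^2$ are $e^{i k_n x}$ with $k_n=n\pi/l$, $n\in\mathbb{Z}$, and the periodic extension to $\mathbb{R}$ is compatible with the nonlocal convolution against $P$ in \eqref{eq:subsystem_general}. Plugging a single-mode ansatz $(a,b)^T e^{\alpha t+i k_n x}$ into the linearization of \eqref{eq:subsystem_approx} around $(u^*,v^*)$ therefore produces precisely the dispersion relation \eqref{eq:dispersion}--\eqref{eq:determ}, with $k^2$ replaced by the discrete value $k_n^2$. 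The $n=0$ mode corresponds to the spatially homogeneous ODE, whose stability follows from the first hypothesis $\Det \mathbf{J}>0$ together with $\Tr \mathbf{J}<0$ (automatic from the cooperative form $f_u,g_v<0$), giving the ODE stability assertion.

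PDE instability then reduces to exhibiting some $n\in\mathbb{N}$ with $\alpha_+(k_n^2)>0$. Since $b(k_n^2)=\Tr \mathbf{J}-k_n^2\Tr \mathbf{D}<0$, formula \eqref{eq:solutiondispersion} shows $\alpha_+(k_n^2)>0 \iff c(k_n^2)<0$. Writing
\begin{equation*}
\mathcal{A}:=\Det \mathbf{D}-f_v\lambda v^* P_4>0,\qquad \mathcal{M}:=d_{11}g_v+d_{22}f_u-f_v\lambda v^* P_2,
\end{equation*}
the corollary hypothesis together with $\Det \mathbf{J}>0$ makes the polynomial $c(k^2)=\mathcal{A}(k^2)^2-\mathcal{M}k^2+\Det \mathbf{J}$ factor over positive reals as $c(k^2)=\mathcal{A}(k^2-k_-^2)(k^2-k_+^2)$ with $0<k_-^2<k_+^2$ (positivity of $k_-^2$ is forced by $c(0)=\Det \mathbf{J}>0$). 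Vieta's relations $k_-^2+k_+^2=\mathcal{M}/\mathcal{A}$ and $k_-^2 k_+^2=\Det \mathbf{J}/\mathcal{A}$ yield the key identity
\begin{equation*}
(k_+-k_-)^2 = k_+^2+k_-^2 - 2 k_- k_+ = \frac{\mathcal{M}}{\mathcal{A}} - 2\sqrt{\frac{\Det \mathbf{J}}{\mathcal{A}}},
\end{equation*}
so dividing the second inequality of \eqref{eq:pattern_condition_finite} by $\mathcal{A}>0$ recasts it as the geometric statement $k_+-k_->\pi/l$.

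The admissible nonzero wavenumbers $k_n=n\pi/l$ are spaced exactly $\pi/l$ apart on the positive real axis, so any open subinterval of $(0,\infty)$ of length strictly greater than $\pi/l$ must contain at least one $k_n$. A pigeonhole step therefore produces some $k_{n_0}$ with $n_0\ge 1$ inside $(k_-,k_+)$, yielding $c(k_{n_0}^2)<0$ and hence $\alpha_+(k_{n_0}^2)>0$, which is the desired instability. For the last assertion, $c(k_n^2)\to+\infty$ as $n\to\infty$ while $b(k_n^2)\to-\infty$, so $\alpha_+(k_n^2)\to-\infty$; the sup of $\alpha_+(k_n^2)$ over $n\in\mathbb{N}$ is therefore attained at some $n$, which is by definition the most unstable mode, with wavelength $2\pi/k_n=2l/n$.

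The main obstacle is spotting the algebraic identity $(k_+-k_-)^2=\mathcal{M}/\mathcal{A}-2\sqrt{\Det \mathbf{J}/\mathcal{A}}$; once it is in hand, it exactly converts the somewhat opaque inequality \eqref{eq:pattern_condition_finite} into the transparent condition that the unstable interval in $k$-space is wider than the modal spacing $\pi/l$, after which the pigeonhole argument closes the proof. Everything else is either routine linearization or a direct appeal to the infinite-domain analysis of Theorem~\ref{theorem3}.
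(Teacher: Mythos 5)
Your proposal is correct and follows essentially the same route as the paper: decompose into the discrete Fourier modes $k_n=n\pi/l$, reduce instability to $c(k_n^2)<0$ for some $n\ge 1$, and convert the hypothesis \eqref{eq:pattern_condition_finite} into the statement $k_+-k_->\pi/l$ via the identity $(k_+-k_-)^2=(k_+^2+k_-^2)-2k_+k_-$ so that the spacing argument guarantees an admissible mode in $(k_-,k_+)$. Your explicit use of Vieta's relations and your remark that the supremum of $\alpha_+(k_n^2)$ is attained (since $\alpha_+(k_n^2)\to-\infty$) merely make explicit steps the paper leaves implicit.
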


\begin{proof}
The non-constant eigenfunctions that satisfy the corresponding  eigenvalue problem $\phi''+\lambda \phi=0$ on the domain $(-l,l)$ with periodic boundary conditions are of the following form
\begin{align*}
\phi_i(x)= a_{1}\sin\left(\frac{i\pi x}{l}\right)+a_{2}\cos\left(\frac{i\pi x}{l}\right),\,\;  i \in \mathbb{N},
\end{align*}
and the corresponding eigenvalues are $k_i=(i\pi/l)^2$ for $i \in \mathbb{N}$.  Now, when $0<k_{-}^2< k_i^2=(i \pi/l)^2<k_{+}^2$  for some $i\in$ $\mathbb{N}$, where $k_{-}$ and $k_{+}$ are defined in (\ref{eq:interval}), the eigenvalue $\alpha_+(k_i^2)$ defined in (\ref{eq:dispersion}) is positive for this $i$.

We then note that the discrete wavenumber $k$ increases by $\pi/l$ with each $i$. Therefore, to guarantee that we have at least one $k^2=(i \pi/l)^2$ in the interval given by $(k_{-}^2,k_{+}^2)$, it is sufficient that the length of the interval $(k_{-},k_{+})$ is larger than $\pi/l$ \citep{shi2011cross}. Using 
	\begin{equation*}
	\begin{aligned}
	(k_{+}-k_{-})^2=(k_{-}^2+k_{+}^2)-2 k^+_{1}k^-_{2}> \Big(\frac{\pi}{l}\Big)^2,
	\end{aligned}
	\end{equation*}
		and the expressions of $k_{-}$ and $k_{+}$ in (\ref{eq:interval}), 
 we obtain the second  instability condition in \eqref{eq:pattern_condition_finite}.
Now, for an interval of length  $2l$, if the instability conditions \eqref{eq:pattern_condition_finite} are satisfied, then a spatially patterned solution will emerge with the corresponding wavenumber $k=i\pi/l$ such that $k^2\in (k_{-}^2,k_{+}^2)$. The most unstable wavenumber $k_m$ is the one that maximizes $\alpha_+(k^2)$ in (\ref{eq:solutiondispersion}).
\end{proof}
	
  \begin{figure}
	\centering
	\includegraphics[width=0.85\textwidth]{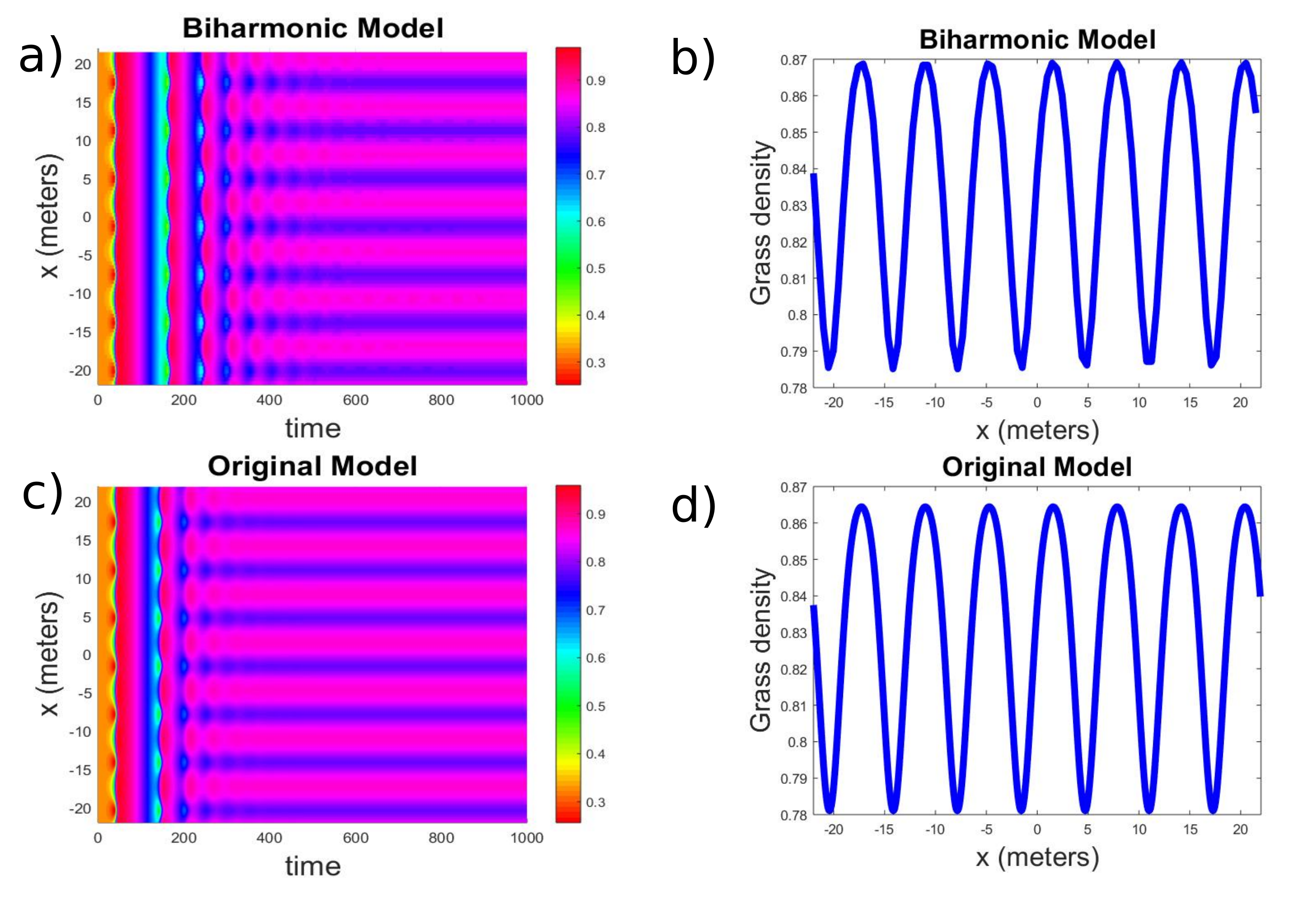}
	\caption{Spatial patterns produced through simulations of the biharmonic system (\ref{eq:original_biharmonic}) (panels a) and b) ) and original system (\ref{eq:subsystem_approx}) (panels c) and d) ) with Case I parameters: $D_{G}=0.04, D_{S}=0.6, \lambda=40,e_{1}=0.7,p_{1}=3.5, f=0.14,e_{3}=0.1140,\delta=\frac{1}{7}$ corresponding to the steady state values $G^*=0.33, S^*=2.75$. For the scale-dependent parameters, we use $\sigma_{1}=0.43$ and $\sigma_{2}=.68$ in the original model and corresponding values of $P_{2}=-0.1388$ and $P_{4}=-0.0225$ for the biharmonic model. Both simulations are performed on a bounded domain $T=(-l, l)=(-7\pi,7\pi)$. All parameters are chosen to satisfy conditions from (\ref{eq:pattern_condition_specific}). Panels a) and c) show temporal evolution of the grass density while panels b) and d) show the final steady state of grass after 1000 time units. The characteristic wavelength is accurately predicted as $\omega=\frac{14\pi}{7}$.}
	\label{fig:7}
\end{figure}

The second instability condition in \eqref{eq:pattern_condition_finite} also defines a minimal length $l_m$ for the emergence of the spatial patterns:
\begin{equation*}
    l>l_m=\pi \sqrt{\frac{\Det{\mathbf{D}}-f_{v}\lambda v^{*}P_{4}}{d_{11}g_{v}+d_{22}f_{u}-f_{v}\lambda v^*P_{2}-2\sqrt{\Det{\mathbf{J}}(\Det{\mathbf{D}}-f_{v}\lambda v^{*}P_{4})}}}.
\end{equation*}
This implies that in numerical simulations, if one chooses $l<l_m$, then no spatial patterns can be observed. On the other hand, when the length $l$ is large, then the interval $(k_{-},k_{+})$ may contain multiple  unstable wavenumbers $k=i\pi/l$, and the spatial patterns with all these wavenumebrs are possible but the one with most unstable wavenumber $k_m$ is the one most likely to be observed. 

\subsection{Grass-Sediment Cooperative System with Nonlocal Interactions}

 We  now apply these results to our Grass-Sediment system (\ref{eq:2}). The biharmonic approximation yields the following approximated system:
\begin{equation}
\left \{ 
 \begin{aligned}
\partial_t G&=D_{G}\partial^2_x G+G\Big(F(S)-G\Big), &x\in \mathbb{R}, t>0\\
\partial_t S&=D_{S}\partial^2_x S+\phi(-L(G) S+1)+\lambda S(P_{2}\partial^2_x G+P_{4}\partial^4_x G),&x\in \mathbb{R}, t>0\\
G(x,0)&=G_{0}(x,0)\ge 0,\,\,\, S(x,0)=S_{0}(x,0)\ge 0, &x\in \mathbb{R},
\end{aligned}\right.
\label{eq:original_biharmonic}
\end{equation}
with $F(S),L(G),P_{2}$ and $P_{4}$ defined previously in Sections 2 and 3.2. From Section 3.1, at the stable uniform positive steady state $(G_{+}^{*},S_{+}^*)$ defined in (\ref{eq:steady}), we have:
\begin{equation*}
\begin{aligned}
f_{u}&=-G_{+}^*,\,\,\, f_{v}=G_{+}^* F'(\frac{1}{L(G_{+}^*)})   ,\\
g_{u}&=-\phi \frac{1}{L(G_{+}^*)}L'(G_{+}^*),\,\,\, g_{v}= -\phi L(G_{+}^*).
\end{aligned}
\end{equation*}
Note that the cooperative form of this system with $f_{v},g_{u}>0$ and $f_{u},g_{v}<0$.
For numerical simulations, we consider this system on a finite domain $T=(-l,l)$ and the following periodic boundary conditions:
\begin{equation*}
\begin{aligned}
G(-l,t)&=G(l,t),\;\;G_{x}(-l,t)=G_{x}(l,t),\\
S(-l,t)&=S(l,t),\;\;S_{x}(-l,t)=S_{x}(l,t).
\end{aligned}
\end{equation*}
Using the results from Theorem \ref{theorem3} and Corollary \ref{periodic_boundary}, we have the following condition necessary for Turing instability on $T$:
\begin{equation}
\begin{aligned}
D_{G}g_{v}+D_{S}f_{u}-f_{v}\lambda S^*P_{2}-2\sqrt{\Det(\mathbf{D})-f_{v}\lambda S^*P_{4}}\sqrt{\Det(\mathbf{J})}&>\Big(\frac{\pi}{l}\Big)^2(\Det{\mathbf{D}}-f_{v}\lambda S^*P_{4}),
\end{aligned}
\label{eq:pattern_condition_specific}
\end{equation}
where
\begin{equation*}
\mathbf{D}=\left( \begin{array}{cc}
D_{G}	&  0\\ 
\lambda S^* P_{2}	& D_{S} \end{array} \right),\, 
\mathbf{J} =	\left( \begin{array}{cc}
f_{u}	&  f_{v}\\ 
g_{u}& g_{v}
\end{array} \right) \Bigg\rvert_{(G^*,S^*)}.
\label{eq:assumptions2}
\end{equation*}

\begin{figure}[t]
	\centering
	\includegraphics[trim=1.5cm 0.1cm 2.3cm 0cm,width=0.9\textwidth,clip]{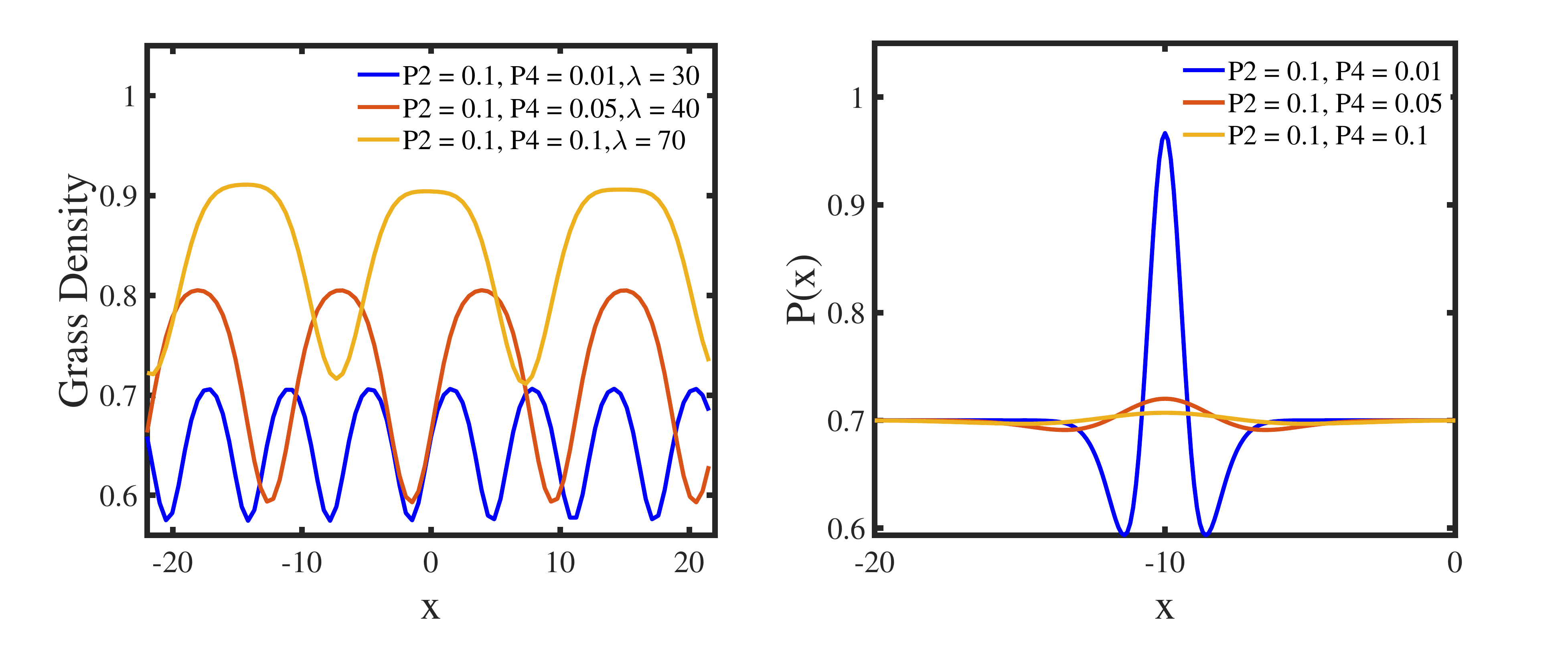}
	\caption{Panel a) displays the final steady states of grass after 1800 time units in the biharmonic system (\ref{eq:original_biharmonic}) for various values of $P_{2}$, $P_{4}$ and $\lambda$. Note that the value of $\lambda$ has to be adjusted to offset increasing $P_{4}$ in order for patterns to emerge. Biologically realistic parameters are chosen such that conditions (\ref{eq:pattern_condition_specific}) are satisfied: $e_{1}=1.05,p_{1}=3.5,\phi=0.14,e_{3}=0.1140,\delta=1/7,D_{G}=0.04,D_{S}=0.6$. Panel b) displays plots of kernel functions in (\ref{kernel}) corresponding to parameters $P_{2}$ and $P_{4}$ from panel a) with larger values of parameter $P_{4}$ resulting in wider kernels.} 
	\label{fig:8}
\end{figure}

For Case I parameter regime from Section 3.1, we choose:  $e_{1}=0.7,p_{1}=3.5, f=0.14,e_{3}=0.1140,\delta=1/7$. We then choose $D_{G}=0.04, D_{S}=0.6, \lambda=40$ for our nonlocal parameter values to satisfy the instability conditions (\ref{eq:pattern_condition_specific}) and numerically integrate the biharmonic system (\ref{eq:original_biharmonic}) on $T=(-l,l)=(-7\pi,7\pi)$. We find that a spatially patterned solution emerges, as predicted (Figure \ref{fig:7}) and these simulations are also consistent with numerical simulations of the original system (\ref{eq:2}), suggesting that the theoretical results derived from the biharmonic system can be applied to the original system to give insight regarding under what conditions a spatially patterned solution emerges. The eigenvalue $\alpha_+(k^2)$ is given by:
\begin{equation}
\begin{aligned}
\alpha_+(k^2)=-0.1893-0.32k^2+\frac{\sqrt{(-0.37858-0.64k^2)^2-0.44393k^4+1.35564k^2-0.03963}}{2}.
\end{aligned}
\label{eq:eigenvalue_specific}
\end{equation}
Furthermore, on the domain $T=(-7\pi,7\pi)$, the range of wavenumbers for which the corresponding eigenvalue $\alpha_+(k^2)$ is positive is given by:
\begin{equation}\label{pp}
\begin{aligned}
k_{-}^2=0.0295< k^2=\left(\frac{i\pi}{7\pi}\right)^2 <3.0242=k_{+}^2, \,\,\, i \in \mathbb{N}.
\end{aligned}
\end{equation}
It can be calculated that for $2\le i\le 12$, \eqref{pp} is satisfied, and when $i=7$, $\alpha_+(k^2)$ is maximized. Hence
the characteristic wavelength of the emerging patterns is $\omega=2l/7=2\pi$. This is consistent with simulation results which show $7$ peaks (Figure \ref{fig:7}). Similar results are obtained for Case II parameter regime (see Figure \ref{fig:11} in the Appendix.)

 \begin{figure}
	\centering
	\includegraphics[width=0.55\textwidth]{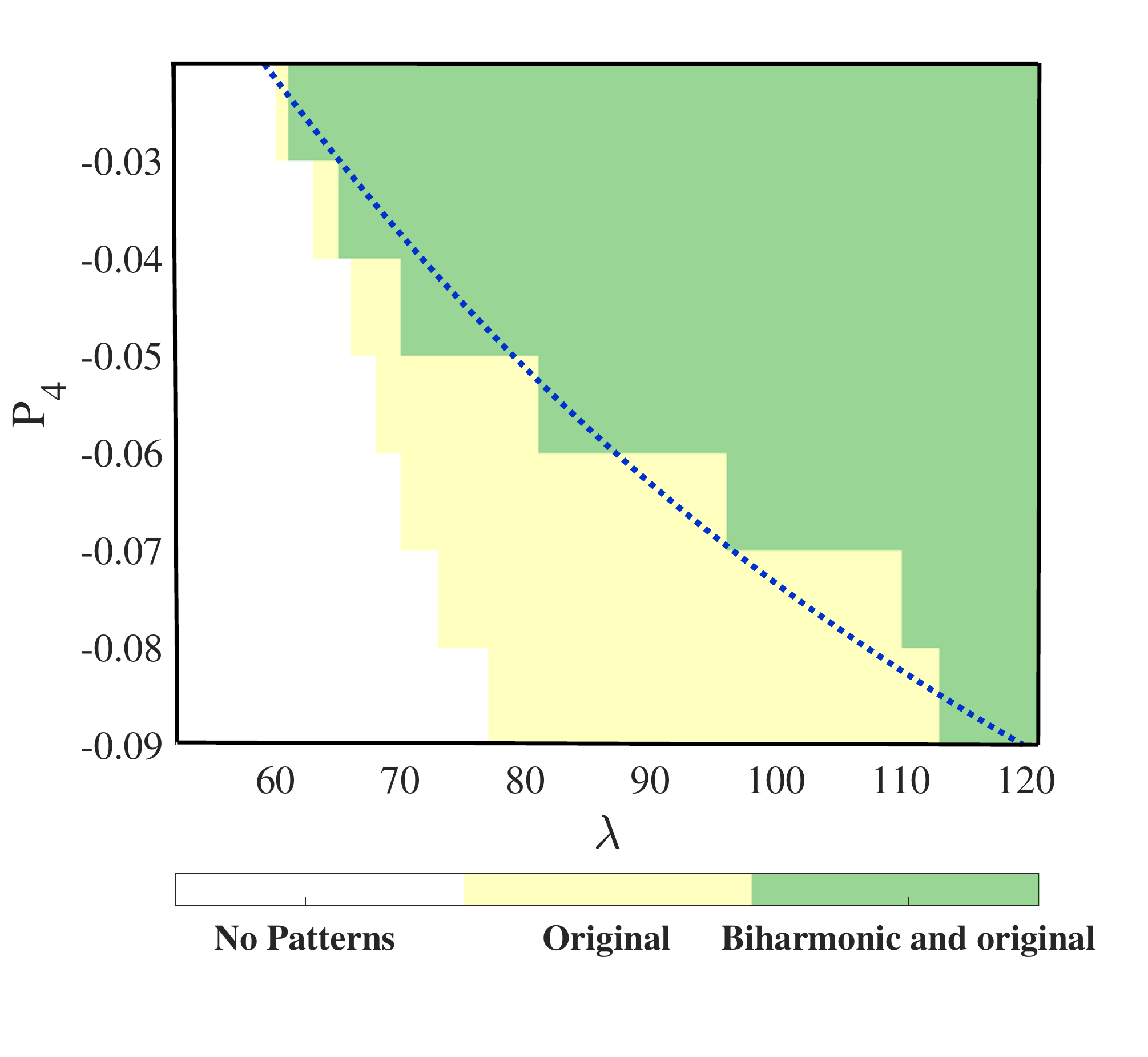}
	\caption{We numerically integrate the fourth order biharmonic system (\ref{eq:original_biharmonic}) for different kernel widths $(P_{4})$ and kernel strengths $(\lambda)$. Similarly, simulations of the original system (\ref{eq:2}) are performed for the corresponding kernel parameters $\sigma_{1}$ and $\sigma_{2}$ calculated from (\ref{kernel_definitions}). All simulations are run over the same domain $T=(-7\pi,7\pi)$ with the following parameters: $D_{G}=0.04,D_{S}=0.6,e_{1}=1.05,p_{1}=3.5,\phi=0.14,e_{3}=0.1140,\delta=1/7,P_{2}=-0.05$. The region of instability derived from Corollary \ref{periodic_boundary} corresponds to the area above the dotted blue curve. The results shown in green correspond to the instances where spatial patterns emerge for both the biharmonic system (\ref{eq:original_biharmonic}) and the original system (\ref{eq:2}), while the results in yellow correspond to instances where patterns emerged only for the original system (\ref{eq:2}). It is clear that the theoretical results from  Corollary \ref{periodic_boundary} (light blue curve) are consistent with the numerical simulations of the biharmonic system (region in green) and less consistent with the original system (region in yellow). Although these results are not as consistent, it is clear that the theoretical results can nonetheless be used to predict the formation of patterns in the original system (\ref{eq:2}).}.
	\label{fig:9}
\end{figure}

Previously, we used biologically realistic parameters from Table 1 (see the Appendix) to perform all numerical simulations, including realistic parameters for the scale-dependent feedback ($P_{2}$, $P_{4}$, $\lambda$). Now, we are interested in how varying these scale-dependent feedback parameters may affect the nature of the spatial patterns in system (\ref{eq:original_biharmonic}). As predicted in Section 3.2, since the biharmonic term $P_{4}$ acts as a stabilizing force, as its value gets larger, the window for spatial patterns decreases and a larger value of $\lambda$ is necessary to offset its effect and allow spatial patterns to emerge (Figure \ref{fig:8}a). In addition, choosing a larger value of $P_{4}$ results in an overall increase in the pattern wavelength (Figure \ref{fig:8}a). This result can also be interpreted in the context of how the coefficients $P_{2}$ and $P_{4}$ are related to the shape of the kernel in (\ref{kernel}) in the original system (\ref{eq:2}) (Figure \ref{fig:8}b). The coefficient $P_{2}$ measures the difference of the variances $\sigma_{1}$ and $\sigma_{2}$ of the excitatory and inhibitory interactions, respectively. The coefficient $P_{4}$ is related to kurtosis and controls the weight of the kernel's tails while $\lambda$ modulates the amplitude of the Mexican-hat kernel. Since for a fixed value of $P_{2}$, an increase in $P_{4}$ results in a wider, flatter kernel shape, the wider the range of the long-range effects given by $P_{4}$, the stronger these interactions need to be (given by $\lambda$) to have a significant effect and lead to the formation of spatial patterns. This makes biological sense, since the intensity of scale-dependent interactions tend to dissipate over larger distances and therefore need to be amplified to have any effect on spatial heterogeneity over longer ranges. In addition, we see that wider kernels result in wider spatial patterns characterized by longer wavelengths. Again, this makes biological sense as one would expect the scale of the spatial interactions to influence the resulting spatial patterns.

Finally, we compare our analytic results with numerical simulations of the approximated biharmonic system (\ref{eq:original_biharmonic}) and the original system (\ref{eq:2}) (Figure \ref{fig:9}). The analytic results from (\ref{eq:pattern_condition_specific}) are consistent with numerical simulations of the biharmonic system and the original system. However, we note that the onset of patterns in the original system occurs sooner than in the biharmonic system. Nonetheless, this result suggests that using the biharmonic system can help find the relevant parameter regime in which spatial patterns are possible in the original system and gain understanding into how the nature of the scale-dependent feedbacks affects the development of spatial patterns.

\section{Discussion} 
\label{sec:8}

We propose a phenomenological model to describe the dynamics of the marsh edge in terms of two-way interactions between marsh grass \textit{Spartina alternifora} and sedimentation. In nature, the marsh edge can frequently be observed in a number of configurations ranging from a spatially uniform  to a more wave-like shoreline. The interest of this paper lies in understanding whether the well-known scale-dependent (nonlocal) feedback between marsh vegetation and sedimentation can lead to spatially variable shoreline configurations. Marsh grass promotes sediment accretion in its immediate surroundings by slowing down current acts as a facilitation mechanism. In turn, the diverted water flow contributes to increased erosion further away and acts as an inhibitory mechanism. We propose a system of reaction-diffusion equations with an additional integral term with a Mexican-hat kernel function that describes the nature of this scale-dependent feedback. Our system is highly cooperative; as cooperative systems often lack the classic activator-inhibitor mechanism necessary for pattern formation, it becomes of interest how and under what conditions spatial patterns may develop. 

We perform a biharmonic approximation of our system and carry out analysis on the simpler biharmonic system that expresses the kernel function as separate short-range and long-range diffusion terms. Using the more mathematically tractable biharmonic system, we are then able to derive general condition for the formation of spatial patterns in a cooperative system such as ours. Further, using numerical simulations, we confirm that the biharmonic model, while an approximation, is consistent with the original model, and therefore we can apply the theoretical results from the biharmonic system to help gain insight into the formation of patterns in the original system. We parameterize the kernel function using a set of reasonable parameters from literature and find that spatial patterns can develop, given that the scale-dependent interactions between marsh vegetation and sediment dynamics are strong enough. The model thus provides further evidence that the presence of scale-dependent interactions is essential for pattern formation and that heterogeneous patterns cannot occur in the presence of weak scale-dependent interactions. Not surprisingly, we find that the choice of wider kernels tend to produce wider spatial patterns (characterized by longer wavelengths) and vice versa. The nature and strength of the grass-sediment scale-dependent interactions depends on many factors such as the underlying hydrodynamics and sediment composition, the exact spatial scale (corresponding to the widths of the Mexican-hat kernel) and relative strength of the scale-dependent feedback are difficult to estimate in the field and can vary substantially. We use one possible set of biologically realistic parameters for the kernel function (Table \ref{Table:Parameters}) and find that the patterns that emerge in simulations occur on a spatial scale consistent with what can observed in nature (4-10 meters between peaks) \citep{vandenbruwaene2011flow}. 

Furthermore, we find that there are two possible parameter regimes in the system. The first regime is especially of interest as it corresponds to a more realistic scenario where marsh vegetation is effective at attenuating erosion through the binding of sediment and decreasing the effect of wave erosion. Given the strong facilitatory nature of the grass-sediment interactions, bistability takes place in this parameter regime. In general, bistable dynamics makes a system especially prone to collapsing to an irreversible state as environmental conditions gradually worsen and a tipping point is reached \citep{dakos2011slowing, kefi2014early, kefi2016can} through the phenomenon of hysteresis. Pattern formation has previously been suggested as a possible coping mechanism for systems close to degradation \citep{chen2015patterned}. The analysis in this paper gives more insight into this previously reported phenomenon as we also find this to be the case in our model \citep{zaytseva2018} where pattern formation allows the marsh edge to cope with harsher erosion through spatial variation. One limitation of our model is the lack of multiple spatial dimensions as only the dynamics on a one-dimensional cross-section of the marsh edge were considered. Hence, we were not able to observe the geometry of the protrusions. In addition, the model is meant to be phenomenological in nature, omitting processes such as the effect and variation of hydrodynamics and wave action, modeled in more detail previously \citep{fagherazzi2012numerical}. 
Despite the relatively simple dynamics of our one-dimensional model, it is able to capture the pattern formation on the marsh edge as a result of scale-dependent feedbacks between vegetation and sediment accumulation. The agreement between the model simulations and field observations suggests that important pattern-generating
processes have been captured in the model and non-local interactions between plants and sedimentation can drive the formation of shoreline patterns. In addition, the results in this paper can be generalized to any cooperative system with scale-dependent feedbacks in the form of short-range activation and long-range inhibition, described using a Mexican-hat kernel function.

\vspace{0.1in}


\section{Appendix}
Figure \ref{fig:10} shows the plot of the functional forms of $F(S)$ and $L(G)$.
\begin{figure}
	\centering
	\includegraphics[trim=5cm 19cm 6.5cm 4cm, width=0.75\textwidth,clip]{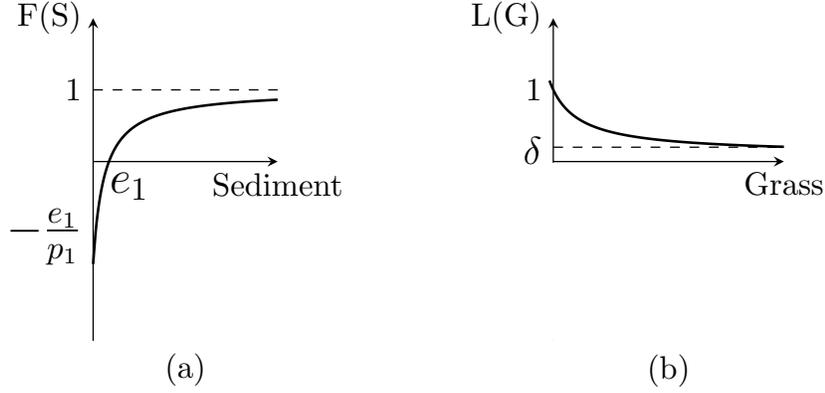}
	\caption{Functions a) $F(S)$ and b) $L(G)$ from (\ref{eq:scaled_functions})}.
	\label{fig:10} 
\end{figure}
Figure (\ref{fig:11}) shows numerical simulations of both the biharmonic system (\ref{eq:original_biharmonic}) and the original system (\ref{eq:2}) for the parameter regime in Case II from Section 3.1. We see that a spatially patterned solution emerges if the instability conditions in (\ref{eq:pattern_condition_specific}) are satisfied. 
  \begin{figure}
	\centering
	\includegraphics[width=0.85\textwidth]{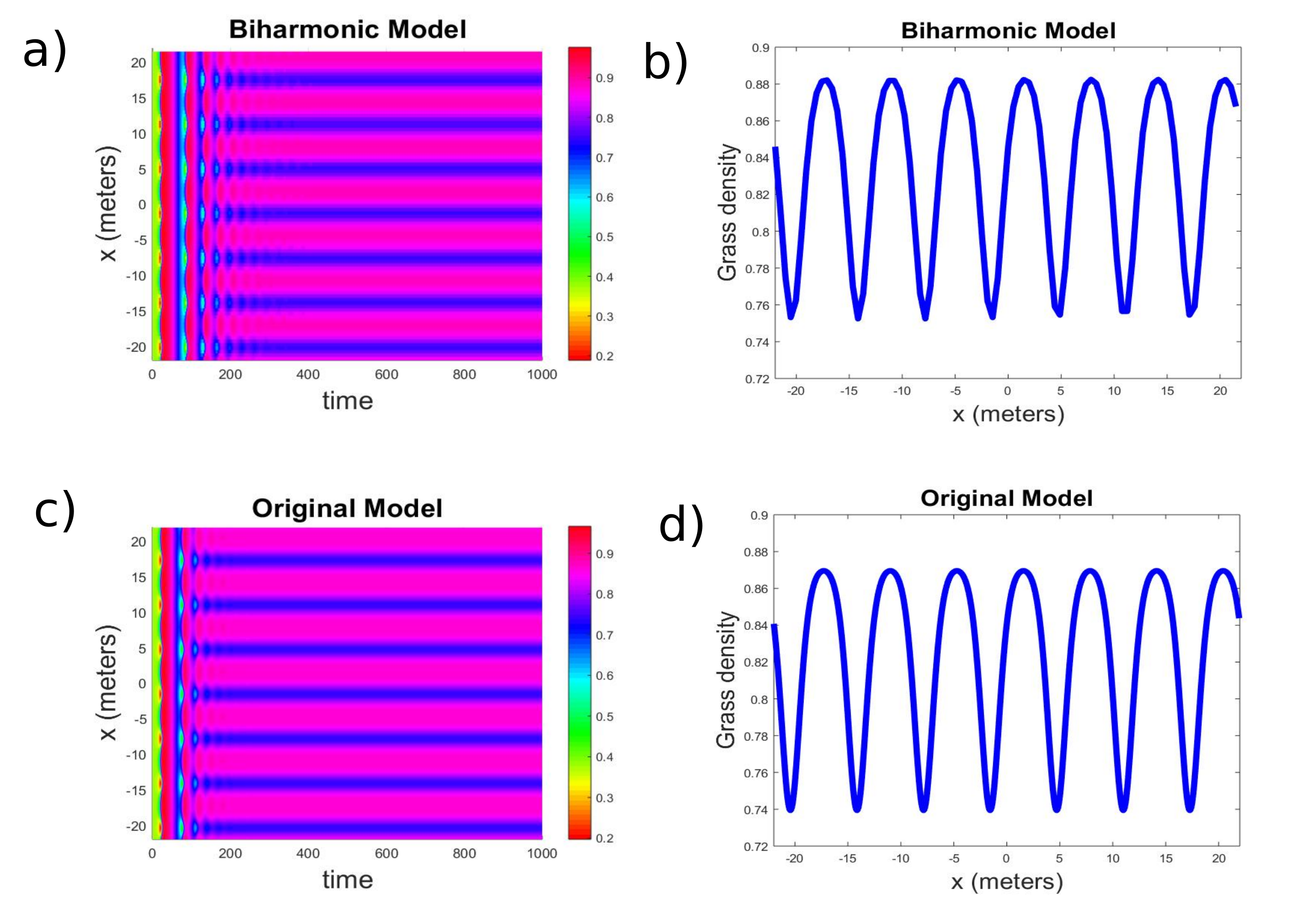}
	\caption{Spatial patterns produced through simulations of the biharmonic system (\ref{eq:original_biharmonic}) (panels a) and b) ) and original system (\ref{eq:subsystem_approx}) (panels c) and d) ) with Case II parameters:$D_{G}=0.04, D_{S}=0.6, \lambda=40,e_{1}=0.7,p_{1}=0.5, f=0.14,e_{3}=0.5,\delta=0.3$. For the scale-dependent parameters, we use $\sigma_{1}=0.43$ and $\sigma_{2}=.68$ in the original model and corresponding values of $P_{2}=-0.1388$ and $P_{4}=-0.0225$ for the biharmonic model. Both simulations are performed on a bounded domain $T=(-7\pi,7\pi)$. All parameters are chosen to satisfy conditions from (\ref{eq:pattern_condition_specific}). Panels a) and c) show temporal evolution of the grass density while panels b) and d) show the final steady state of grass after 1000 time units. The characteristic wavelength is accurately predicted as $\omega=\frac{14\pi}{7}$.}.
	\label{fig:11}
\end{figure}
Table \ref{Table:Parameters} shows the biologically realistic parameters for the original system and their sources. We use the parameter values from Table \ref{Table:Parameters} to obtain the new re-scaled parameters from Section 2 to use in all numerical simulations performed in this paper.
 \begin{table}[ht]
 	\caption{Biologically realistic parameters for the original system before non-dimensionalization}
 	\label{Table:Parameters}
 	\centering
 
      \resizebox{\textwidth}{!}{
 		\begin{tabular}{p{1.5cm}p{5.5cm}p{3cm}p{3cm}p{5.5cm}}
 			\hline
 			Symbol & Meaning & Unit & Value & Source \\ 
 			\hline
 			$\hat{D_{G}}$ &cordgrass diffusion coefficient& $m^2$ yr$^{-1}$&0.06 - 0.135&\citep{adams2012plant}\\ 
 			$\hat{D_{S}}$ &sediment diffusion coefficient& $m^2$ yr$^{-1}$&0.876&\citep{Liu14a}\\
 			$c$ &self-limiting growth rate of cordgrass& $m^2$ shoots$^{-1}$ yr$^{-1} $& 0.0057& \citep{Yang14}\\
 			$\psi$ &minimum erosion rate& yr$^{-1}$ &0.002-0.3&\citep{hardaway1999shoreline,rosen1980erosion}\\
 			$k_{s}$ &cordgrass density at which marsh erosion is half-maximal & shoots $m^{-2}$&30-50&estimated\\
 			$g$ &erosion constant in the absence of cordgrass & non-dimensional&5&\citep{mariotti2010numerical,sheehan2015tidal}\\
 			$\eta$ & sediment deposition rate& $m$ yr$^{-1}$& 0.002-0.006&\citep{stumpf1983process,goodman200717}\\
 			$p^*$ &intrinsic growth rate of cordgrass& yr$^{-1}$&1.5& \citep{Yang14}\\
 			$l_{1}$ &sediment threshold for cordgrass persistence& $m$ &0.02& estimated\\
 			$l_{1}^*$ &sediment elevation at which cordgrass growth is half-maximal& $m$ & 0.06& estimated\\
 			$\hat{\lambda}$ &strength of nonlocal cordgrass-sediment interactions& $m^2$ shoots$^{-1}$yr$^{-1}$&0.0004-0.3&\citep{bouma2007spatial} \\
 			$\sigma_{1}$ & standard deviation of the excitatory feedback for cordgrass&$m$&0.43& \citep{bouma2007spatial}\\
 			$\sigma_{2}$ & standard deviation of the inhibitory feedback for cordgrass&$m$&0.68&\citep{bouma2007spatial}\\
 			\hline
 		\end{tabular}}
  \end{table}
 
  All numerical simulations in this paper are performed using MATLAB. We use an implicit finite differencing scheme to numerically integrate the original equation. Although this scheme is more computationally intensive, it is chosen because it is always numerically stable and convergent. Because domain size plays an important role in the system's ability to form patterns, a large enough domain has to be chosen to be able to fit patterns with their characteristic wavelength. We evaluate all integrals using the \textit{trapz} function in MATLAB, which performs numerical integration using the trapezoidal rule. For the convolution term, we evaluate the integral of the product of the kernel and the periodically extended solution on the interval $(-3l,3l)$, to make sure an adequate number of kernels are considered in calculating the net effect. To numerically integrate the biharmonic system, we use an explicit finite differencing scheme in MATLAB. This scheme is less computationally intensive, and is easier to implement, given the extra biharmonic term. For both models, the numerical simulations are performed on a spatial domain $(-l,l)$ with $l=7\pi$ with periodic boundary conditions.  We apply Turing's idea of diffusion driven instability and use a spatially periodic perturbation of the stable steady state of the corresponding system of ODEs as the initial condition for our simulations.
\bibliographystyle{agsm}

\end{document}